\documentclass[11pt]{article}
\usepackage[letterpaper,margin=1in]{geometry}
\usepackage{amsmath, amsthm, mathtools, amssymb}
\usepackage[compress]{natbib}
\usepackage{booktabs}
\usepackage{enumitem}
\usepackage{microtype}
\usepackage{xcolor}
\usepackage{hyperref}
\usepackage{caption}
\usepackage{titlesec}
\usepackage[capitalise]{cleveref}
\usepackage{float}
\usepackage{tikz}
\usepackage{pgfplots}
\usepackage{todonotes}
\pgfplotsset{compat=1.18}

\hypersetup{
    colorlinks=true,
    linkcolor=red!70!black,
    citecolor=blue,
    urlcolor=red!70!black
}

\newtheorem{theorem}{Theorem}[section]
\newtheorem{lemma}[theorem]{Lemma}
\newtheorem{proposition}[theorem]{Proposition}
\newtheorem{corollary}[theorem]{Corollary}
\newtheorem{definition}[theorem]{Definition}
\newtheorem{example}[theorem]{Example}
\newtheorem{remark}[theorem]{Remark}

\newcommand{\E}{\mathbb{E}}
\newcommand{\Rplus}{\mathbb{R}_{\ge 0}}
\newcommand{\F}{\mathcal{F}}
\newcommand{\Ftwo}{\mathcal{F}^{(2)}}
\newcommand{\A}{\mathcal{A}}
\newcommand{\M}{\mathcal{M}}
\newcommand{\Rev}{\operatorname{Rev}}
\newcommand{\OPT}{\operatorname{OPT}}
\newcommand{\rev}{\operatorname{rev}}
\newcommand{\ind}{\mathbf{1}}
\DeclareMathOperator{\Pdim}{Pdim}

\title{Auctions with Price Predictions}

\author{Siddharth Prasad \\ \small Toyota Technological Institute at Chicago \and Dravyansh Sharma\\ \small Toyota Technological Institute at Chicago \and Alec Sun\\ \small University of Chicago \and Muthu Sundar \footnote{Authors are ordered alphabetically.} \\ \small University of Chicago}

\date{}

\begin{document}

\maketitle

\begin{abstract}
    We design auctions for the sale of a single item with unlimited supply given a single prediction of the revenue-maximizing uniform price. This departs from prior work on auctions with predictions which typically assumes predictions of every bidder's value. Our main result is a characterization of the Pareto frontier for consistency and robustness attainable by any universally truthful auction. We show that a mechanism that randomizes between posting the predicted price and conducting an optimal prior-free fallback auction is Pareto-optimal. We then extend our results to achieve graceful degradation of revenue as a function of the prediction accuracy. Finally, we study how such a prediction can be obtained from historical market data through the lens of learning theory. Together, our results give an end-to-end account of how a price prediction can be learned and used robustly to maximize auction revenue.
\end{abstract}
\section{Introduction} \label{sec:introduction}

Setting revenue-maximizing prices in auctions is one of the most important challenges in the theory and practice of mechanism design. \citet{myerson1981optimal} derived the revenue-maximizing auction for the sale of a single item among multiple bidders. Since then, revenue-optimal auction design has eluded analytical characterizations beyond very specific settings, for example optimal auctions for two to six items with a single bidder \citep{manelli2006bundling,pavlov2011optimal,giannakopoulos2018duality}. Revenue-optimal auction design for the sale of two items among multiple bidders remains an open problem.

In this paper, we design high-revenue pricing mechanisms for auctions with an unlimited supply of identical items and unit-demand bidders. This kind of sales mechanism is often referred to as a digital goods auction in reference to the fact that the marginal cost of production of additional units is zero. Our model captures reproducible goods and access rights such as software licenses, e-books, streaming subscriptions, and API access/cloud services, but more generally extends to markets where available inventory is effectively unlimited over the relevant demand range, for example rice, wheat, and water.

Our mechanisms receive as input a prediction of the revenue-maximizing uniform price, also known as the optimal monopoly price. Price predictions reflect the reality of data-driven demand forecasting and price optimization models used to drive revenue in massive industries like modern advertising markets, where reserve prices are routinely optimized from data. For example, a large field experiment on sponsored-search auctions at Yahoo found that theory-guided reserve prices derived from empirical bidder data substantially increased revenue \citep{OstrovskySchwarz2023}. Other example applications of data-driven price models for revenue optimization in electronic commerce include fashion retail \citep{caro2012clearance,ferreira2016analytics}, subscriptions for online job-recruiting service ZipRecruiter \citep{dube2023personalized}, and dynamic pricing for the Alibaba platform \citep{liu2019dynamic}.

Price predictions are a departure from the typical prediction model studied in the large body of work on {\em mechanism design with predictions} \citep{balkanski2024mechanism}. Much of that literature assumes access to point predictions about each bidder's value. Such bidder-level predictions, which we call \emph{value} predictions, are substantially more demanding to acquire, requiring bidder-specific data, bidder identification across markets, and high-dimensional value predictions that grow quickly in size with the market. These requirements are especially restrictive when bidders are new, anonymous, or observed only sparsely. Our informational model is more compressed, giving the seller one scalar prediction of a revenue-maximizing uniform price for the realized market. A price prediction of this form can be learned directly from historical markets, public features, or domain experts. We later show that the sample complexity of learning this shared price is only logarithmic in the number of bidders.

In this paper, we take the position that price predictions are a significantly more realistic kind of population-level prediction than bidder-wise value predictions. This position is bolstered by the large body of work on learning prices from data \citep{mohri2015revenue,mohri2016learning,shen2019learning,huchette2020contextual,huang2022learning,hu2025learning}. Beyond prices, deep-learning based auction design has obtained high-revenue auctions \citep{dutting2019optimal,wang2024Gemnet,wang2026bundleflow} by learning model weights to represent the allocation and pricing rules of the auction. And, on the theoretical front, a large body of work studies the sample complexity of learning auction parameters from data \citep{ColeRoughgarden2014,MorgensternRoughgarden2015,balcan2020efficient,balcan2021learning,BalcanSandholmVitercik2023}. In the aforementioned research strands, the pricing rule of the auction is learned directly from samples. Therefore, our price-prediction-augmented auctions framework may be viewed as an end-to-end pipeline from learning prices to an auction that is meaningfully able to take advantage of the predicted price. Such an end-to-end framework is largely missing from the existing literature on mechanism design with predictions. Our work addresses this gap with a more realistic model of predictions in auctions.

While price predictions are easier to obtain than value predictions, the mechanism design problem introduces new challenges. A predicted valuation profile reveals how demand is distributed across all bidders, and a single price suppresses nearly all of that information. Another major issue is the asymmetric nature of pricing errors. If the prediction is slightly below an optimal price, many profitable sales may survive, whereas a prediction just above the highest value can be rejected by every bidder and earn no revenue. The auction must therefore decide how much to trust the prediction while retaining meaningful revenue protection when it is wrong. Our main result tightly characterizes what is achievable in this setting.

\begin{theorem}[Informal main result] \label{thm:informal-main}
For unlimited-supply auctions with a prediction of the optimal uniform price, no universally truthful mechanism can have better revenue guarantees than mechanisms which solely randomize between posting the predicted price and running an optimal prior-free auction.
\end{theorem}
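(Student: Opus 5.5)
We will formalize the statement as a Pareto-frontier characterization. For a universally truthful mechanism $\M$ and predicted price $\hat p$, its \emph{consistency} $\alpha$ is the worst-case ratio of $\Rev(\M)$ to the optimal uniform-price revenue over profiles where $\hat p$ is a revenue-maximizing uniform price. Its \emph{robustness} $\beta$ is the worst-case ratio to the prior-free benchmark $\Ftwo$ (the best uniform-price revenue that sells to at least two bidders) over all profiles. The claim has two halves.

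\textbf{Achievability.} Let $\M_\lambda$ post $\hat p$ with probability $\lambda$ and otherwise run an optimal prior-free auction with competitive ratio $c^\ast$ against $\Ftwo$. Under a correct prediction, posting $\hat p$ earns $\F(v)\ge \Ftwo(v)$, so $\M_\lambda$ is $(\lambda+(1-\lambda)/c^\ast)$-consistent. In every case it earns at least $(1-\lambda)\Ftwo/c^\ast$, so it is $(1-\lambda)/c^\ast$-robust. Universal truthfulness is preserved because each component is truthful and the randomization does not depend on bids.

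\textbf{Impossibility, which is the main step.} Fix any universally truthful $\M$ that is $\beta$-robust. We must show its consistency is at most $\lambda+(1-\lambda)/c^\ast$ with $\lambda=1-c^\ast\beta$. The plan is to fix $\hat p$ and take a hard distribution $D$ over profiles of the kind that witnesses the lower bound for prior-free auctions. Since $c^\ast$ is defined via such a minimax, Yao's principle gives $\E_{v\sim D}[\Rev(\M'(v))]\le \E_D[\Ftwo]/c^\ast$ for every deterministic truthful $\M'$.

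Next we embed $\hat p$: rescale the hard instance so that its mass sits at and above $\hat p$, and add a "predicted" profile $v^\ast$ on which $\hat p$ is optimal. Universal truthfulness means every deterministic mechanism in the support of $\M$ is a bid-independent pricing rule, in which each bidder faces a threshold price depending only on the others' bids. The revenue the mechanism extracts beyond what prior-free auctions can guarantee must therefore come from pricing near $\hat p$. Denote by $\mu$ the probability mass (or "revenue budget") the mechanism commits to thresholds equal to $\hat p$ on $v^\ast$-like profiles. We then show two things. First, consistency is at most $\mu + (1-\mu)/c^\ast$. Second, robustness is at most $(1-\mu)/c^\ast$, because on a perturbed instance where values lie just below $\hat p$, that mass earns nothing while $\Ftwo$ is unchanged up to $\epsilon$.

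The main obstacle is making this mass-splitting argument rigorous. The difficulty is that a threshold depends on the other bids, so "posting $\hat p$" is not a single event across profiles. First I would try a discrete construction with $n\to\infty$ bidders whose values lie in $\{\hat p-\epsilon,\hat p\}$, together with the $c^\ast$-hard instances scaled beneath them. I would then use anonymity/symmetrization over bidder identities to argue that each bidder's threshold distribution is the same across the coupled profiles. Combining the two inequalities and eliminating $\mu$ yields exactly the frontier achieved by $\M_\lambda$.
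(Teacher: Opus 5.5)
There is a genuine gap, and it starts in your achievability half, which sets the wrong target. You argue the mixture is $(\lambda+(1-\lambda)/c^\ast)$-consistent because posting $\hat p$ earns $\F(v)\ge\Ftwo(v)$. That inequality points the wrong way. The fallback only guarantees $\Ftwo(v)/c^\ast$, and on profiles where the prediction is correct, $\Ftwo$ can be a vanishing fraction of $\F$. For example, take $n=2$, $v=(H,1)$ and $\hat p=H$: cross-pricing earns $1$ while $\F(v)=H$. More generally, \cref{prop:no-full-robustness} rules out any prediction-free guarantee against $\F$.

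So the mixture is only $\lambda$-consistent. In your notation (consistency $\alpha$, robustness $\beta$, $c^\ast=\lambda_n$), the true frontier is $\alpha+c^\ast\beta\le1$, which is \cref{thm:n-frontier}. Your two halves would instead assert $\alpha+(c^\ast-1)\beta\le1$. Your claimed achievable points violate the true bound: for $n=2$ and $\lambda=0$ you claim $\alpha=\beta=\tfrac12$, but $\alpha+2\beta=\tfrac32>1$. The impossibility you aim for, consistency at most $\lambda+(1-\lambda)/c^\ast$, is true but not tight, so proving it would not establish the statement.

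The impossibility sketch is also missing the idea that makes the argument work. First, your $\mu$ is not well defined. Second, your test profiles (values near or above $\hat p$) are the wrong ones: there $\F\approx\Ftwo$, so the fallback already delivers consistency and these profiles cannot expose its cost. Third, perturbing every coordinate changes every $v_{-i}$, so bid-independence couples nothing across your profiles. Fourth, a single perturbed profile cannot produce the factor $1/c^\ast$ at all, since a bid-independent rule can be tailored to extract $\F$ on any one fixed profile. That factor only appears after averaging over the equal-revenue distribution, so the mass at $\hat p$ would have to be tracked on those averaged profiles, not on $v^\ast$.

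The paper's fix is to make consistency profiles and robustness profiles differ in a single coordinate, with the low values far below the prediction:
\begin{itemize}
\item Fix $\hat p=H$. For each $i$ and each $v_{-i}\in[1,H/n]^{n-1}$, the same distribution $Q_i(v_{-i},H)$ must serve both $v_i\in[1,H/n]$ and $v_i=H$.
\item When $v_i\in[1,H/n]$, robustness applies.
\item When $v_i=H$, the price $H$ is optimal because $kv_{(k)}\le H$ for $k\ge2$, so consistency applies with $\F=H\gg\Ftwo$.
\item The budgeting lemma $\int_1^{H/n}\rev_z(Q)\,z^{-2}\,dz+\rev_H(Q)/H\le1$ holds for every realized price. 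Integrating against the product measure $\mu_{n,H}^{\otimes n}$ therefore caps total integrated revenue at $n\,m_{n,H}^{n-1}$.
\item On the other side, robustness contributes $rI_{n,H}\to rn\lambda_n$ via the equal-revenue identity, and consistency contributes $nc\,\ell_{n,H}^{n-1}\to nc$.
\end{itemize}
This yields $c+\lambda_n r\le1$ for each fixed $n$, with no Yao step, symmetrization, or $n\to\infty$ limit.
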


A mechanism is {\em universally truthful} if truthful bidding is always a weakly-dominant strategy for every bidder regardless of randomness and regardless of what other bidders do. This is ensured by offering a price to each bidder independent of their own bid. A {\em prior-free} auction operates without assuming any underlying probability distributions over buyer valuations, instead guaranteeing strong competitive revenue relative to optimal benchmarks across all possible inputs. The randomization factor, which we denote as $\alpha$, determines how much the mechanism trusts the prediction. Placing greater weight on the predicted price improves revenue when the prediction is correct but weakens the guarantee when it is wrong and vice versa for placing greater weight on the prior-free auction. Adjusting this probability  traces the entire consistency-robustness Pareto-optimal frontier.

This theorem is stronger than simply showing that this mixture performs well. The lower bound applies to every universally truthful mechanism, including mechanisms that transform the prediction, correlate prices across bidders, or use the prediction differently across valuation profiles. None of the more convoluted approaches can improve one guarantee without weakening the other. Thus, despite receiving less information than a predicted valuation profile, the optimal use of a price prediction admits a complete and  simple characterization. In contrast to value-prediction results that establish achievability, we obtain a matching lower bound proving no universally truthful bid-independent mechanism does better.

We also complement this structural result in two directions. First, we extend our consistency-robustness tradeoff to handle graceful degradation with price prediction error. Second, we close the statistical loop by studying how the predicted price can be learned from historical markets. The remainder of this section states these contributions precisely and positions them relative to prior work.

\subsection{Our contributions}
\label{sec:contributions}
We now introduce the notation and benchmarks that contextualize our results. For a valuation profile $v$, let $v_{(k)}$ be the $k$th-highest value. The full uniform-price benchmark is
\[
    \F(v)=\max_{1\le k\le n} k\,v_{(k)},
\]
the maximum revenue obtained by posting one price to all bidders. The more relaxed benchmark is
\[
    \Ftwo(v)=\max_{2\le k\le n} k\,v_{(k)},
\]
which requires at least two bidders to buy at the optimal price. \cref{rem:benchmark} shows this relaxed benchmark is necessary since it removes the impossible demand that a prior-free truthful auction extract a constant fraction of one bidder whose value is arbitrarily larger than the others. Let $\lambda_n$ denote the smallest approximation ratio for which a universally truthful bid-independent auction
can guarantee $\frac{1}{\lambda_n} \cdot \Ftwo(v)$ on every valuation profile, as characterized by \citet{ChenGravinLu2014}. In particular, $\lambda_2=2$, and $\lambda_n$ increases monotonically to approximately $2.42$. 

Given a predicted price $\hat p$ and constants $c,r \in [0,1]$, a mechanism is \emph{$c$-consistent} if it earns expected revenue at least $c \cdot \F(v)$ whenever $\hat p$ is an optimal uniform price, and it is \emph{$r$-robust} if it earns at least $r \cdot \Ftwo(v)$ for every valuation profile and every prediction. 

We construct a randomized mechanism that offers $\hat p$ to every bidder with probability $\alpha$ and, with probability $1-\alpha$, ignores the prediction and runs an optimal $\lambda_n$-competitive prior-free auction. This simple mechanism guarantees
\[
    c=\alpha,
    \qquad
    r=\frac{1-\alpha}{\lambda_n}.
\]
Our main result, informally described in \cref{thm:informal-main}, shows that the above mechanism is Pareto-optimal with respect to the consistency–robustness tradeoff for each $\alpha\in [0, 1]$, and varying $\alpha$ traces the entire Pareto frontier. \cref{thm:informal-main} is formally stated as \cref{thm:n-frontier}. We now outline the rest of our contributions below.

\paragraph{Deriving the consistency-robustness Pareto frontier (Sections~\ref{sec:frontier} and~\ref{sec:n}).}
The main ingredient in the proof of \cref{thm:n-frontier} is a novel budgeting lemma stated in \cref{lem:budget}. It upper bounds the normalized expected revenue of any random price distribution offered by the mechanism over a selected range of bidder values. Applying the lemma separately to each of the $n$ bidders shows the mechanism has at most $n$ units of total budget. If the prediction is $H$, robustness on profiles where all values are below $H/n$ consumes roughly $n \cdot \lambda_n \cdot r$ units, while consistency on profiles with one value equal to $H$ consumes roughly $n \cdot c$ units. Thus, we conclude $n\lambda_n r + nc \le n$, and dividing by $n$ gives our final Pareto frontier,
\[
    c+\lambda_n r\le1,
\]
attained by the optimal mechanism outlined above.

\paragraph{Tolerance to imperfect predictions (Section~\ref{sec:error}).} Beyond the consistency-robustness tradeoff that characterizes the extremes of exactly correct and arbitrarily wrong predictions, we also study how mechanism performance degrades relative to intermediate prediction errors in \cref{sec:error}. Let $\rho(s)$ denote the guaranteed fraction of $\F$ with log-error $s$ given a prediction that overestimates the optimal monopoly price. Then, we show that every mechanism satisfies
\[
    \lambda_n r+c+\int_0^\infty \rho(s)\,ds\le1,
\]
and we furthermore show that this condition is tight---we present a {\em randomized discounting} mechanism that achieves the Pareto bound. Thus, robustness, consistency, and error tolerance all stem from the same pricing budget.

\paragraph{Learning a shared (market-agnostic) price (Section~\ref{sec:learning-prices}).}
Having characterized how a price prediction should be used, we conclude by showing how to {\em learn} price predictions from data. In \cref{sec:learning-prices}, we study the class of revenue functions
\[
    \mathcal U_n
    =
    \left\{
        v\mapsto p\cdot\left|\{i:v_i\ge p\}\right|
        :p\ge0
    \right\}
\] induced by posted prices
and show its pseudo-dimension is $\Pdim(\mathcal U_n)=\Theta(\log n)$.
Consequently, when values lie in $[0,H]$,
\cref{cor:shared-price-learning-main} shows that
\[
    O\!\left(
        \frac{\log n+\log(1/\delta)}{\epsilon^2}
    \right)
\]
historical markets suffice to achieve expected revenue within additive error $\epsilon nH$ of the best shared price. Finally, \cref{cor:end-to-end-learning} combines the learned prediction with the best prior-free mechanism, giving a direct end-to-end revenue guarantee.

\paragraph{Contextual (market-specific) learning (Section~\ref{sec:learning-prices}).}
Finally, when public market features are available, we extend the analysis above to a class $\mathcal H$ of market-specific, or {\em contextual}, price predictors and show that
$
    \Pdim(\mathcal U_{\mathcal H})
    =
    O\!\left(d_{\mathcal H}\log n\right),
$
where $d_{\mathcal H}=\Pdim(\mathcal H)$ (\cref{thm:contextual-price-learning}).

{\footnotesize
\begin{table}[t]
\centering
\caption{Consistency-robustness tradeoffs, pseudodimension, and sample complexity bounds for auctions with price predictions.}
\label{tab:results}
\begin{tabular}{@{}lll@{}}
\toprule
Auction setting & Tradeoff & Optimal mechanism \\
\midrule
2 bidders with exact prediction & $c+2r\le1$ & Post prediction + prior-free\\
$n$ bidders with exact prediction & $c+\lambda_n r\le1$ & Post prediction + prior-free \\
Imperfect prediction
& $\lambda_n r+c+\int_0^\infty \rho(s)\,ds\le1$
& Log-discount pricing + prior-free \\
\bottomrule
\end{tabular}

\vspace{0.2in}

\begin{tabular}{@{}lll@{}}
\toprule
Learning setting & Pseudodimension & Sample complexity \\
\midrule
Shared-price learning & $\Pdim(\mathcal U_n)=\Theta(\log n)$ &
$T=O((\log n+\log(1/\delta))/\epsilon^2)$ \\
Contextual revenue learning & $\Pdim(\mathcal U_{\mathcal H})=O(d_{\mathcal H}\log n)$ &
$T=O((d_{\mathcal H}\log n+\log(1/\delta))/\epsilon^2)$ \\
\bottomrule
\end{tabular}

\end{table}
}
\paragraph{}\Cref{tab:results} succinctly summarizes our above contributions. We position our contributions with prior work in Section~\ref{sec:related}. Section~\ref{sec:model} introduces our model and the randomized prediction-fallback mechanism. Sections~\ref{sec:frontier} and~\ref{sec:n} develop the budgeting argument and prove the optimal consistency-robustness frontier, first for two bidders and then for the general case of $n$ bidders.
Section~\ref{sec:error} gives a tight characterization of graceful revenue degradation under imperfect predictions, while Section~\ref{sec:learning-prices} studies how price predictions can be learned from historical data, including sample-complexity guarantees for shared and contextual pricing rules. Section~\ref{sec:conclusion} concludes with directions for future work. \looseness-1
\subsection{Related work}
\label{sec:related}

\paragraph{Competitive analysis of auctions.}
Prior-free mechanism design asks for truthful auctions that perform well on every valuation profile, without assuming that bidder values are drawn from a known distribution. For unlimited-supply goods, this program has led to competitive auctions evaluated against uniform-price revenue benchmarks and to bid-independent posted-price constructions \citep{FiatEtAl2002,GoldbergEtAl2006}. The random sampling optimal price (RSOP) mechanism randomly partitions the bidders into two groups, estimates a profitable uniform price for each group, and applies the price of one group to the other.

Subsequent work sharpened the analysis of this benchmark and the mechanisms competing against it. \citet{AlaeiMalekianSrinivasan2013} proved that RSOP has approximation ratio at most $4.68$ relative to $\Ftwo$. \citet{ChenGravinLu2014} then moved from analyzing a particular auction to identifying the optimal approximation ratio against $\Ftwo$. They characterize the exact ratio $\lambda_n$ and use a product equal-revenue distribution in the corresponding lower bound. A more detailed explanation of their optimal mechanism is outlined in \cref{subsec:chen-mech}.

\paragraph{Learning-augmented mechanism design.} \citet{balcan2005mechanism} introduced machine learning for mechanism design.
There has recently been a rise in popularity surrounding learning-augmented mechanism design where researchers seek to utilize correct information to improve guarantees while still achieving performance relative to prior-free results when the information is incorrect. \citet{LykourisVassilvitskii2018} formalized this tension, outlining the consistency-robustness framework within learning-augmented algorithms. This perspective has since been applied to strategic settings including facility location, scheduling, and clock auctions~\citep{AgrawalEtAl2022,BalkanskiGkatzelisTan2023,GkatzelisSchoepflinTan2025}. These papers share the goal of consistency and robustness, but the predicted object and the economic benchmark depend on the application. More recent work broadens this framework to randomized and robust facility location, budget-feasible mechanisms, mechanisms with outliers, general multi-dimensional mechanism design, and other strategic objectives \citep{BalkanskiGkatzelisShahkarami2024,BarakGuptaTalgamCohen2024,AmanatidisEtAl2025,DeligkasEtAl2025,GoldnerMohanTsilivis2026,BalcanPrasadSandholm2023}.

\paragraph{Auctions with predictions.}

The literature on auction design with predictions generally assumes access to predictions about bidder values. \citet{XuLu2022} initiate a systematic study of mechanism design with predictions and identify tradeoffs among consistency, robustness, and error tolerance. Building on this perspective, \citet{LuWanZhang2024} study competitive auctions supplied with predictions of bidders' private values, including unlimited-supply, limited-supply, and downward-closed environments. \citet{BalkanskiEtAl2024,CaragiannisKalantzis2024} assume predictions of the highest bidder value in online or randomized single-item auctions. These papers predict private values, whereas we predict the revenue-maximizing uniform price.~\citet{BalcanPrasadSandholm2023,BalcanPrasadSandholm2025,PrasadBalcanSandholm2026,prasad2025revenue} develop a general framework for mechanism design with predictions (with applications to combinatorial auctions) that enables highly expressive predictions about bidders, and show how those predictions can be learned from historical agent data. Finally, \citet{ChristodoulouSgouritsaVlachos2024} distinguish predictions of private types, which they call \emph{input advice}, from recommendations about a mechanism's outcome, which they call \emph{output advice}. Our predicted uniform price $\hat{p}$ can be viewed as a form of output advice. To the best of our knowledge, our paper is the first to study auctions under an output price prediction model.

Compared to \citet{LuWanZhang2024} that assumes predictions on bidders' values, our guarantees are weaker, which is expected given that the prediction in our model is a single scalar $\hat{p}$ while theirs assumes predictions on every bidder's value. Given a prediction of the full valuation profile, \citet{LuWanZhang2024} obtain 1-consistency against a stronger benchmark $\OPT(v)=\sum_{i=1}^n v_i$ and robustness factor $\lambda_n+2$ against $\Ftwo$ in the unlimited-supply setting. Thus, full value predictions allow for constant robustness even under perfect consistency. In contrast, with price predictions constant-factor robustness is impossible to achieve alongside 1-consistency.\looseness-1

\paragraph{Learning auction parameters from samples.}
With the aim of providing end to end guarantees, it is also crucial to specify how the predictions are sourced. To characterize the difficulty of obtaining predictions, complementary literature studies how much historical data is needed to learn a revenue-maximizing auction. \citet{ColeRoughgarden2014} study the sample complexity of approximately optimal revenue maximization. \citet{MorgensternRoughgarden2015} develop pseudo-dimension bounds for structured auction classes, shifting the emphasis from learning an unrestricted value distribution to controlling the complexity of the auction family being optimized. Subsequent work gives more general tools for deriving uniform-convergence guarantees from the structure of parameterized algorithm and mechanism classes \citep{balcan_piecewise_decomposable,BalcanSandholmVitercik2023}.\looseness-1

Our learning result is a tight specialization of this program to the elementary family of shared uniform prices. Although the family has only one real-valued parameter, a market with $n$ bidders induces a revenue curve with as many as $n$ demand changes. We show that its pseudo-dimension is $\Theta(\log n)$. While the upper bound uses known techniques from prior work, the lower bound is new in the context of data-driven mechanism design. Tight lower bounds are known in some other applications of data-driven algorithm design~\citep{balcan2017learning,balcan2022provably,balcan2024new,Du2025TuningAA}, but these typically involve problem-specific constructions of instances that are challenging to learn. We then extend the upper-bound argument to contextual price predictors, obtaining pseudo-dimension $O(d_{\mathcal H}\log n)$. In both cases, the learner optimizes the downstream revenue of the price rule directly, and the learned price can be combined with optimal prior-free mechanisms. While we mainly focus on statistical efficiency, some prior works~\citep{sharma2023efficiently,balcan2024accelerating,balcan2024newguarantees} give interesting computational efficiency guarantees for pricing problems, and it would be interesting to extend their techniques to our setting. Another interesting direction would be to learn prices online by extending prior work on online data-driven algorithm design~\citep{balcan2018dispersion,Balcan2021LearningtolearnNP,sharma2020learning}.
\section{Model of price predictions}
\label{sec:model}

We introduce our auction model and desired guarantees. First, we describe the unlimited-supply setting and prior-free revenue benchmarks. Next, we introduce our price-prediction model and use it to define consistency under a correct prediction and robustness under an arbitrary prediction. We end by showing why randomization is necessary in this setting and present the randomized mechanism that drives our main results.

\subsection{Unlimited-supply environment}

There are $n\ge2$ unit-demand bidders and an unlimited supply of identical items.  Bidder $i$ has private value $v_i>0$ for one copy and quasi-linear utility.  We write $v_{(1)}\ge\cdots\ge v_{(n)}$ for the order statistics; bidder labels themselves are not reordered. For a uniform price $p$, define
\[
    N_v(p)\coloneq \bigl|\{i:v_i\ge p\}\bigr|,
    \qquad
    \Rev(v;p)\coloneq p \cdot N_v(p).
\]
It is enough to optimize the posted price over reported values, since the number of buyers is constant between consecutive values.

\begin{definition}[Uniform-price benchmarks]
\label{def:benchmarks}
The full monopoly benchmark and its standard prior-free truncation are
\begin{align}
    \F(v)&\coloneq \max_{1\le k\le n}k v_{(k)},
    \label{eq:F}\tag{F}\\
    \Ftwo(v)&\coloneq \max_{2\le k\le n}k v_{(k)}.
    \label{eq:F2}\tag{$F^{(2)}$}
\end{align}
An \emph{optimal price} is any $p^*(v)$ satisfying
$\Rev(v;p^*(v))=\F(v)$.
\end{definition}

The full benchmark $\F (v)$ allows a sale to only the highest bidder, but the relaxed benchmark $\Ftwo$ enforces at least two bidders to buy at the uniform price. This small modification is essential since no truthful auction can extract a constant fraction of an arbitrarily isolated highest value without any prior information. By contrast, $\Ftwo$ admits a constant approximation ratio and is the canonical benchmark in the unlimited-supply auction literature \citep{GoldbergEtAl2006,ChenGravinLu2014}.

\begin{remark}[Stronger benchmarks] \label{rem:benchmark}
The full benchmark $\F$ is stronger than $\Ftwo$ but does not have a finite prior-free approximation ratio without bounds on the values. Thus, requiring robustness against $\F$ makes the problem infeasible.  In general symmetric feasibility environments, $\mathrm{EFO}^{(2)}$ is the accepted extension of this truncation
\citep{HartlineYan2011}, and in unlimited supply, $\Ftwo$ is the appropriate specialization.

There is a stronger monotone-price benchmark, usually denoted $\mathcal{M}^{(2)}$, for \emph{ordered-bidder} markets \citep{BhattacharyaEtAl2013}. It assumes that bidders have some natural, meaningful rank (like high-value to low-value) and compares an auction's performance against personalized, descending prices set for each bidder. But, without such an order, the symmetric unlimited-supply model studied here has $\Ftwo$ as its canonical benchmark.\looseness-1
\end{remark}

\subsection{Optimal prior-free approximation ratio for truthful auctions} \label{sec:truthful}

We now describe the characterization of~\citet{ChenGravinLu2014} of the optimal performance for universally truthful mechanisms without priors. A mechanism is universally truthful if each bidder, conditional on the other bids and the mechanism's internal random seed, is offered a price independent of their own bid. The bidder receives one item and pays the price offered by the mechanism if and only if their bid is at least the price. Such a mechanism is a distribution over deterministic dominant-strategy incentive-compatible and individually rational mechanisms \citet{GoldbergEtAl2006,ChenGravinLu2014}. We impose no positive transfers and normalize the payment of a bidder who does not buy to zero.\looseness-1

Let $\lambda_n$ denote the smallest approximation ratio for which a universally truthful bid-independent auction
can guarantee $\frac{1}{\lambda_n} \cdot \Ftwo(v)$ on every valuation profile. \citet{ChenGravinLu2014} show that

\begin{equation}
    \lambda_n
    =1-\sum_{k=2}^{n}
       \left(-\frac1n\right)^{k-1}
       \frac{k}{k-1}\binom{n-1}{k-1}.
    \label{eq:lambda}
\end{equation}
In particular,
    $\lambda_2=2,
    \lambda_3=\frac{13}{6},
    \lambda_4=\frac{215}{96},$
and $\lambda_n$ increases to approximately $2.42$. In this paper, let $\A_n^*$ denote an optimal prior-free mechanism that achieves approximation ratio $\lambda_n$.

\citet{ChenGravinLu2014} also identifies the product equal-revenue distribution as the worst-case distribution. For later use, its exact identity is
\begin{equation}
    \int_{[1,\infty)^n}
        \Ftwo(v)\prod_{i=1}^n\frac{dv_i}{v_i^2}
    =n\lambda_n.
    \label{eq:equal-revenue-identity}
\end{equation}
In words, under this distribution, every posted price yields the exact same expected revenue since a price $p$ is accepted with probability $\frac1 p$. Since every price yields identical expected revenue, this distribution removes any clear statistical structure to exploit and therefore is the hardest possible instance for an auction to perform well against.

\subsection{Price predictions} \label{sec:price-predictions}

We now describe our main model of price predictions. Before receiving the current bids, the seller receives a price recommendation $\hat p\in (0,\infty)$.  The recommendation may be constructed from historical data and public covariates, but it must not depend on any current bidder's report (so as to preserve truthfulness). The prediction $\hat p$ is \emph{correct} on profile $v$ if $\hat p$ is an optimal price for $v$, noting that ties among optimal prices are allowed.

\paragraph{Price predictions are easier to learn than value predictions.} We claimed in \cref{sec:introduction} that price predictions are more realistic to obtain than bidder-wise value predictions. Here we construct a simple problem instance in which learning the optimal monopoly price is trivial while learning bidders' values is impossible. A more in-depth investigation of how to obtain learned prices is in \cref{sec:learning-prices} and \cref{sec:learning}. 

\begin{example}
Fix $n\ge2$ and choose any $h\in(1,n/(n-1))$. In every market, bidder $1$ has value $1$, while each bidder $i\ge2$ independently has value either $1$ or $h$, each with probability $1/2$. Note that $p^*=1$ is an optimal uniform price since it earns $n$ and every price $p>1$ sells to at most $n-1$ bidders, so it only earns $(n-1)h<n$ for every realization. Without observing the current bidders, the optimal price is perfectly predictable, but since the realized values of bidders $2,\ldots,n$ are independent fair draws, no predictor based only on past markets can exactly determine those current bidder values.
\end{example}

\subsection{Consistency, robustness, and failure of deterministic mechanisms}
Before we present our main mechanism, we first show why a pricing rule must perform randomization. We begin with formal definitions for consistency and robustness.

\begin{definition}[Consistency and robustness]
\label{def:guarantees}
A mechanism $\M$ has \emph{consistency} $c\in[0,1]$ if
\[
    \E[\Rev(\M(v,\hat p))]\ge c \cdot \F(v)
    \qquad
    \text{whenever }\Rev(v;\hat p)=\F(v).
\]
It has \emph{robustness} $r\in[0,1]$ if
\[
    \E[\Rev(\M(v,\hat p))]\ge r \cdot \Ftwo(v)
    \qquad
    \text{for every }v\text{ and every }\hat p.
\]

\end{definition}

The two guarantees deliberately use different benchmarks. Correct advice identifies a price earning the full benchmark $\F$, which is impossible under arbitrary advice, requiring the prior-free benchmark $\Ftwo$.

The following theorem shows truthful deterministic mechanisms fail to achieve any constant robustness and consistency. 

\begin{theorem}[Deterministic mechanisms fail]
\label{thm:deterministic-fail}
For every $n \ge 2$, no deterministic bid-independent mechanism can simultaneously have consistency $c>0$ with respect to $\F$ and robustness $r>0$ with respect to $\Ftwo$.
\end{theorem}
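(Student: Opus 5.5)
Fix a deterministic bid-independent mechanism $\M$ that has consistency $c>0$; we will exhibit a profile and a prediction on which its revenue is below $r\cdot\Ftwo(v)$ for every fixed $r>0$. Determinism means that for each prediction $\hat p$, bidder $i$ is offered a price $t_i(v_{-i},\hat p)$ that depends only on the other bids and on $\hat p$. Bidder $i$ buys iff $v_i\ge t_i$ and then pays $t_i$. The argument uses this structure in two steps: consistency forces low offers to a lone high bidder, and that same offer then wastes revenue on profiles where the prediction is badly wrong.

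\textbf{Step 1 (consistency pins down offers).} Take $\hat p=H$ with $H$ large. Let bidder $1$ have value $H$ and let bidders $2,\dots,n$ all have a common small value $\epsilon$, with $\epsilon<H/n$. Then $\F(v)=\max(H,n\epsilon)=H$, so $\hat p$ is a correct prediction. Bidders $2,\dots,n$ together can pay at most $(n-1)\epsilon$. Consistency $c>0$ therefore forces bidder $1$ to buy at some price $t_1(\epsilon\mathbf{1}_{-1},H)\ge cH-(n-1)\epsilon$, which is positive once $\epsilon$ is small. Because the mechanism is bid-independent, this offer $t_1$ depends only on $\epsilon\mathbf{1}_{-1}$ and $H$.

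\textbf{Step 2 (robustness fails on the same $v_{-1}$).} Keep $\hat p=H$ and $v_{-1}=\epsilon\mathbf{1}$, so bidder $1$ still faces the offer $t_1\ge cH-(n-1)\epsilon$. Now lower bidder $1$'s value to $v_1=\epsilon$. Then $v_1<t_1$, so bidder $1$ does not buy. The resulting profile is the all-$\epsilon$ profile, with $\Ftwo(v)=n\epsilon$.

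The remaining bidders' offers are the obstacle. Any $j\ge2$ with $t_j\le\epsilon$ buys, and the mechanism's revenue could then still be of order $n\epsilon$. We handle this by applying Step 1 symmetrically to each bidder $j$: the high-value bidder is $j$ with value $H$, and the others have value $\epsilon$. Each such application forces $t_j(\epsilon\mathbf{1}_{-j},H)$ to be large, and $\epsilon\mathbf{1}_{-j}$ is exactly bidder $j$'s view in the all-$\epsilon$ profile. Hence on the all-$\epsilon$ profile every bidder is offered a price of order $cH\gg\epsilon$, nobody buys, and the revenue is $0<r\cdot n\epsilon$ for every $r>0$. This contradicts robustness $r>0$.

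We expect the only delicate point to be the use of bid-independence: each offer must depend solely on $(v_{-i},\hat p)$, so that the views $\epsilon\mathbf{1}_{-j}$ coincide across the consistency profiles and the robustness profile. Formally we would choose $\epsilon<cH/n$ so that $cH-(n-1)\epsilon>\epsilon$. With that choice every offer in the all-$\epsilon$ profile strictly exceeds $\epsilon$, and the argument goes through for all $n\ge2$.
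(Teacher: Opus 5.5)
Your proof is correct and follows essentially the same route as the paper: it fixes $\hat p = H$ and applies consistency to each profile with a single bidder at $H$ and the rest at a small value $L$ (your $\epsilon$), forcing $q_i(L,\ldots,L;H) \ge cH-(n-1)L > L$. On the all-$L$ profile every bidder then faces that same offer and rejects, so revenue is zero; your condition $\epsilon < cH/n$ is exactly the paper's choice $H > nL/c$.
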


\begin{proof}
Assume for contradiction that such a mechanism exists.

Define $q_i(v_{-i};\hat p)$ as the price given to bidder $i$, which is a function of all of the other bids and the public prediction $\hat p$. Then, fix $L>0$, choose $H>\max\left\{nL,\frac{nL}{c}\right\}$, and fix the public prediction $\hat p=H$. 

For a bidder $i$, consider the valuation profile where $v_i=H$ and $v_j=L$ for every $j \ne i$. Since $H>L$, posting $H$ earns $H$ in total from bidder $i$, whereas posting any other price accepted by all $n$ bidders can earn at most $nL<H$. Any $p\le L$ earns at most \(nL<H\), any \(L<p<H\) earns \(p<H\), and any \(p>H\) earns zero. Thus, the prediction $\hat p = H$ is correct, and consistency requires revenue at least $cH$.
Now, since the $n-1$ bidders other than $i$ can contribute at most $(n-1)L$, the price offered to bidder $i$ must satisfy\looseness-1

\[ 
    q_i(L,\ldots,L;H) \ge cH-(n-1)L > L
\]
to be $c-$consistent.

Then, consider a different valuation profile where every bidder has value $L$, and let the prediction $\hat p$ remain at $\hat p = H$. Now, the prediction is incorrect since offering $L$ would be the optimal price that earns revenue $nL$. The condition of $r$-robustness demands revenue at least $rnL$, but each bidder $i$ is offered the same $q_i(L,\ldots,L;H)>L$ as before, so every bidder rejects. The mechanism therefore earns zero revenue contradicting $r$-robustness for every $r>0$.
\end{proof}
We conclude that randomization is necessary to jointly achieve constant robustness and consistency.

\subsection{The randomizing mechanism}
\label{sec:upper}

We first introduce a shorthand to help characterize our results. For a distribution $Q$ over nonnegative prices and a bidder with value $x$, define the expected revenue from the bidder when offered a price from $Q$ as,
\begin{equation}
    \rev_x(Q)\coloneq \E_{P\sim Q}\bigl[P\cdot \ind\{P\le x\}\bigr].
    \label{eq:truncated-revenue}
\end{equation}
This is the only auxiliary revenue notation used in the lower bound.
In this section we define a family of randomized mechanisms that we later show are Pareto-optimal for consistency and robustness. Let $\A$ be any universally truthful prior-free mechanism that is
$\lambda$-competitive against $\Ftwo$, that is,\looseness-1
\[
    \E[\Rev(\A(v))]\ge \frac{\Ftwo(v)}{\lambda}
    \qquad\text{for every }v.
\]

\begin{definition}
\label{def:mixture}
For $\alpha\in[0,1]$, the mechanism $\M_{\alpha,\A}$ does the following:
\begin{enumerate}[leftmargin=2em]
    \item with probability $\alpha$, offer $\hat p$ to every bidder;
    \item with probability $1-\alpha$, run $\A$.
\end{enumerate}
\end{definition}

\noindent In the context of auctions with price predictions, we refer to a prior-free mechanism $\mathcal A$ that is robust to inaccurate predictions as the \emph{fallback} mechanism. We now analyze the consistency and robustness of the mechanism $M_{\alpha, \A}$.

\begin{proposition}
\label{prop:generic-upper}
The mechanism $\M_{\alpha,\A}$ is universally truthful, has consistency at least $\alpha$ with respect to $\F$,
and has robustness at least $\frac{1-\alpha}{\lambda}$ with respect to $\Ftwo$.
\end{proposition}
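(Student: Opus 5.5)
The plan is to verify the three claims separately. Each one follows from a short convexity or mixture argument, because $\M_{\alpha,\A}$ is by construction a probability mixture of two mechanisms.

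\emph{Universal truthfulness.} Realize the mechanism with a public coin that is independent of all bids and lands heads with probability $\alpha$. On heads, every bidder $i$ is offered $\hat p$. This price depends only on the prediction, and by the model of \cref{sec:price-predictions} the prediction does not depend on any current report. So the offer is independent of $v_i$, and the heads branch is bid-independent. On tails, the mechanism runs $\A$. Since $\A$ is universally truthful, conditional on its own internal seed it offers each bidder a price depending only on $v_{-i}$. Fixing both the coin and $\A$'s seed therefore yields a deterministic bid-independent mechanism. Hence $\M_{\alpha,\A}$ is a distribution over dominant-strategy truthful, individually rational mechanisms, which is exactly the definition of universal truthfulness in \cref{sec:truthful}.

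\emph{Revenue.} By linearity of expectation over the coin,
\[
\E[\Rev(\M_{\alpha,\A}(v,\hat p))] = \alpha\,\Rev(v;\hat p) + (1-\alpha)\,\E[\Rev(\A(v))].
\]
Here $\Rev(v;\hat p)=\hat p\,N_v(\hat p)$ is the revenue of the posted price. Both terms are nonnegative, because prices are nonnegative and losers pay zero.

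\emph{Consistency.} Suppose $\Rev(v;\hat p)=\F(v)$. Drop the second term, which is nonnegative, to get revenue at least $\alpha\F(v)$.

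\emph{Robustness.} For arbitrary $v$ and $\hat p$, drop the first term and apply the $\lambda$-competitiveness of $\A$. This gives revenue at least $(1-\alpha)\Ftwo(v)/\lambda$.

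No step is a real obstacle. The only point needing care is the truthfulness claim: the prediction must be fixed before bids are seen, and the randomization must be bid-independent. Both hold by the model.
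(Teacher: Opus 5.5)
Your proposal is correct and follows the same argument as the paper: a bid-independent coin choosing between two universally truthful branches keeps the mechanism universally truthful, and dropping one nonnegative branch or the other gives $\alpha\F(v)$ consistency and $(1-\alpha)\Ftwo(v)/\lambda$ robustness. Your explicit public-coin construction and linearity-of-expectation decomposition simply spell out details the paper leaves implicit.
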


\begin{proof}
First we show that $\M_{\alpha,\A}$ is truthful. The random branch factor $\alpha$ is chosen independently of all bids. Since both branches are universally truthful, randomizing between them is universally truthful.

Next we show the consistency and robustness guarantees. If the prediction $\hat p$ is correct, the prediction branch earns exactly
$\Rev(v;\hat p)=\F(v)$, so expected revenue is at least
$\alpha \cdot \F(v)$. When the prediction is incorrect, we ignore the prediction branch and bound the revenue of the fallback $\mathcal A$:
\[
    \E[\Rev(\M_{\alpha,\A}(v,\hat p))]
    \ge (1-\alpha) \cdot \frac{\Ftwo(v)}{\lambda}.
\]
We conclude that $\M_{\alpha,\A}$ is $\alpha$-consistent and $\frac{1-\alpha}{\lambda}$-robust.
\end{proof}

Using the optimal truthful prior-free mechanism $\A = \A_n^*$ with $\lambda=\lambda_n$ yields the family of randomized
mechanisms $\M_{\alpha, \A_n^*}$ that we will use in our main result.

\paragraph{Cross-pricing fallback for 2 bidders.} \label{sec:cross-pricing}

For two bidders, a particularly simple optimal fallback is \emph{cross-pricing}: offer bidder $1$
bidder $2$'s bid and offer bidder $2$ bidder $1$'s bid.  If
$v_{(1)}>v_{(2)}$, only the highest bidder accepts, producing revenue $v_{(2)}$. If the values tie,
both accept.  In every case the revenue is at least
\[
    v_{(2)}=\frac12 \cdot \Ftwo(v),
\]
so cross-pricing achieves the optimal approximation ratio $\lambda_2 = 2$.

\paragraph{What is the optimal \texorpdfstring{$n$}{n}-bidder fallback?} \label{subsec:chen-mech}

The optimal truthful prior-free mechanism $\A_n^*$ derived in \citet{ChenGravinLu2014} is most naturally understood through
its price-probability feasibility system.  On a finite multiplicative value grid, let
$z_i(b_{-i},p)$ be the probability of offering price $p$ to bidder $i$ when the other bids are
$b_{-i}$.  A feasible prior-free pricing rule satisfies
\begin{align}
    &z_i(b_{-i},p)\ge0,
    \qquad
    \sum_p z_i(b_{-i},p)\le1,
    \label{eq:CGL-probability}\\
    &\sum_{i=1}^n\sum_{p\le b_i}p\,z_i(b_{-i},p)
      \ge \frac{\Ftwo(b)}{\lambda}
      \qquad\text{for every bid vector }b.
    \label{eq:CGL-revenue}
\end{align}
\citet{ChenGravinLu2014} characterize when this system is feasible and prove feasibility at
$\lambda=\lambda_n$.

Sampling from the resulting bidder-specific price distributions implements a universally truthful
auction.

This characterization is exact but is not a short closed-form auction for general $n$.  On a
discretized market it can be implemented by solving the associated feasibility problem, but the
number of bid profiles and conditional price distributions grows rapidly with the grid and the
number of bidders.  It is therefore appropriate as an optimal theoretical black box, while a seller
may prefer a simpler fallback in practice.

\paragraph{Simple practical fallbacks.}
The random sampling optimal price (RSOP) mechanism defined in \citet{GoldbergEtAl2006} is easy to implement: randomly split bidders into two groups, compute the best
uniform price in each group, and post that price to the other group. \citet{AlaeiMalekianSrinivasan2013} showed that RSOP's approximation ratio is at most
$4.68$ against $\Ftwo$, but it is widely conjectured that in fact RSOP achieves approximation ratio 4.

A simple mechanism with a better approximation ratio is the sampling cost sharing (SCS) auction of
\citet{FiatEtAl2002}.  Randomly partition the bidders into $A$ and $B$, compute
$\F(A)$ and $\F(B)$, and use each group to set a target revenue for the other.  Given a target $T$
and a group of bids, the cost-sharing subroutine chooses the largest $k$ for which the $k$ highest
bids are at least $\frac{T}{k}$, and charges each of those bidders $\frac{T}{k}$. SCS is truthful and
has approximation ratio 4 against $\Ftwo$.  Using SCS in \cref{def:mixture} gives
\[
    c=\alpha,\qquad r=\frac{1-\alpha}{4}.
\]
This is weaker than the optimal $r=\frac{1-\alpha}{\lambda_n}$ but is computationally
feasible.
\section{Deriving the consistency-robustness Pareto frontier} \label{sec:frontier}

In \cref{sec:upper} we showed that the randomized mechanism $\M_{\alpha,\A^*_n}$ is $\alpha$-consistent and $\frac{1-\alpha}{\lambda_n}$ robust. In this section we prove that this family of mechanisms for $\alpha\in [0, 1]$ is Pareto-optimal with respect to consistency and robustness. The main difficulty is demonstrating the lower bound rules out every other bid-independent way of using the prediction. Our mechanism offers either a point mass on $\hat p$ or an optimal prior-free mechanism, but others could utilize any valid probability distribution that depends on $\hat p$ or a more involved combination of $\hat p$ and prior-free pricing rules.

\subsection{A budgeting lemma}
\label{sec:budget}

The key technical ingredient in the proof of \cref{thm:informal-main} is a novel budgeting lemma, which converts the requirement of robust revenue across many valuation profiles into a constraint on the price distribution offered to the bidder. For intuition, when a bidder receives a random price from a distribution $Q$, independent of their own value, robustness requires the distribution to earn enough revenue for many possible values of that bidder. On the other hand, consistency requires the distribution to place enough mass near the correct price, so the lemma averages $\rev_z(Q)$ over a family of values $z$ and shows that these requirements must share a single unit of normalized revenue.

A similar idea appears in \citet{CaragiannisKalantzis2024}, where they use Myerson's payment identity to show that requiring strong revenue guarantees across many different prediction errors consumes a limited allocation capacity, leading to a feasibility constraint characterized using an integral. While similar, our lemma applies to randomized price distributions and bounds normalized revenue across bidder values. When applied separately to every bidder, it yields the general Pareto frontier.

\begin{lemma}[Budgeting lemma]
\label{lem:budget}
Let $Q$ be any distribution over prices in $\Rplus$, and let $1<L\le H$.  Then
\begin{equation}
    \int_1^L \rev_z(Q) \cdot \frac{dz}{z^2}
    +\frac1H\rev_H(Q)
    \le1.
    \label{eq:budget}
\end{equation}
\end{lemma}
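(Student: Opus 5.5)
We will prove the inequality by exchanging the order of integration: the left-hand side is linear in $Q$, so it suffices to check it pointwise for every deterministic price $p\ge 0$ and then average over $P\sim Q$.

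Writing $\rev_z(Q)=\E_{P\sim Q}[P\,\ind\{P\le z\}]$ and applying Tonelli's theorem (everything is nonnegative), we get
\[
    \int_1^L \rev_z(Q)\,\frac{dz}{z^2}+\frac1H\rev_H(Q)
    =\E_{P\sim Q}\Bigl[\,g(P)\Bigr],
    \qquad
    g(p)\coloneq p\int_{\max(p,1)}^{L}\frac{dz}{z^2}\,\ind\{p\le L\}+\frac{p}{H}\ind\{p\le H\}.
\]
Since $Q$ is a probability distribution, it then suffices to show $g(p)\le 1$ for every $p\ge0$.

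For the pointwise bound we split into three cases. If $p\le 1$, the integral equals $p(1-1/L)$ and the second term is $p/H\le 1/L$ because $H\ge L$. This gives $g(p)\le 1-1/L+1/L=1$. If $1<p\le L$, the integral equals $p(1/p-1/L)=1-p/L$ and the second term is $p/H\le p/L$, so again $g(p)\le 1$. If $L<p\le H$, the integral term vanishes and $g(p)=p/H\le1$. If $p>H$, then $g(p)=0$. In every case $g(p)\le 1$, and taking expectation over $P\sim Q$ yields \eqref{eq:budget}.

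There is no real obstacle here. The only points needing care are the measurability and Tonelli justification for swapping the expectation and the integral, and the middle case, where the hypothesis $H\ge L$ is exactly what makes the bound close. Note also that the hypothesis $L>1$ only ensures the integration range is nondegenerate.
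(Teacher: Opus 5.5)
Your proposal is correct and follows the paper's proof essentially verbatim: Tonelli reduces the claim to showing that the per-price quantity $p\bigl(\int_{\max\{1,p\}}^{L} z^{-2}\,dz\bigr)+\frac{p}{H}\ind\{p\le H\}$ is at most one, and the same four-case check then establishes this bound. Two small differences: in the case $p\le 1$ you bound $p/H\le 1/L$ where the paper writes $p(1-1/L+1/H)\le p$, and you announce three cases but list four; neither affects correctness.
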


\begin{proof}
Let $P\sim Q$.  Tonelli's theorem applies because the integrand is nonnegative.  Therefore the
left-hand side of \eqref{eq:budget} equals
\[
    \E\left[
        P\left(
            \int_{\max\{1,P\}}^L 1 \cdot \frac{dz}{z^2}
            +\frac{\ind\{P\le H\}}{H}
        \right)
    \right],
\]
where the integral is zero if its lower endpoint exceeds $L$.  It is enough to show that the
quantity inside the expectation is at most one for every realized $P$.
\begin{itemize}[leftmargin=2em]
    \item If $0\le P\le1$, it equals
    $P(1-1/L+1/H)\le P\le1$.
    \item If $1<P\le L$, it equals
    $P(1/P-1/L+1/H)=1-P(1/L-1/H)\le1$.
    \item If $L<P\le H$, the integral vanishes and the quantity is $P/H\le1$.
    \item If $P>H$, both terms vanish.
\end{itemize}
Taking expectations completes the proof.
\end{proof}

The density $z^{-2}$ is chosen so that its tail scales as $1/z$.  Multiplying this tail by a posted
price cancels the scale of the price.  Consequently, every realized price consumes at most one
unit of normalized revenue budget.  A random pricing rule is a convex combination of realized
prices, so it has the same unit budget.  The last term $\rev_H(Q)/H$ reserves part of this
budget for the high value at which the prediction will be correct.  Earning more at the high value leaves less
capacity for the continuum of lower-value profiles needed for robustness.

\begin{remark}[Relation to equal-revenue lower bounds]
The weight $z^{-2}$ is the density of the equal-revenue distribution, for which every deterministic
posted price has the same expected revenue.  Product equal-revenue measures are classical in
prior-free auction lower bounds and are central to the characterization of
\citet{ChenGravinLu2014}. \citet{CaragiannisKalantzis2024} use the same inverse-square integral in an
inverse form of Myerson's payment identity for single-item learning-augmented auctions.  \cref{lem:budget} is a truncated posted-price version
tailored to optimal monopoly price predictions.
\end{remark}

As a quick consequence, robustness against the full benchmark is impossible.

\begin{proposition}[No positive robustness against $\F$]
\label{prop:no-full-robustness}
On an unbounded value domain, no universally truthful bid-independent mechanism has a positive
robustness with respect to the full benchmark $\F$, even when the public advice is fixed.
\end{proposition}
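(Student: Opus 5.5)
We argue by contradiction with a fixed public advice $\hat p$, and we will use the budgeting lemma (\cref{lem:budget}) with the $H$-term dropped (it is nonnegative). Suppose a universally truthful bid-independent mechanism has robustness $r>0$ with respect to $\F$. Fix bidder $1$ and set the other bidders' values to $v_j=1$ for all $j\neq 1$. Conditional on this $v_{-1}$ and on $\hat p$, bidder $1$ is offered a random price drawn from a fixed distribution $Q$, independent of $v_1$. This $Q$ is the marginal of the mechanism's internal randomness; correlation with the other bidders' prices does not matter, because we only use linearity of expectation over bidders.

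\emph{Step 1 (per-profile requirement).} Consider profiles $v=(z,1,\dots,1)$ with $z\ge n$. Then $\F(v)\ge z$, so robustness demands total expected revenue at least $rz$. The other $n-1$ bidders each pay at most their value $1$, so they contribute at most $n-1$. Hence
\[
\rev_z(Q)\ge rz-(n-1)\qquad\text{for every } z\ge n .
\]
Since $Q$ does not depend on $z$, this single distribution must satisfy the bound simultaneously for all large $z$.

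\emph{Step 2 (integrate against the budget).} Apply \cref{lem:budget} with arbitrary $L$ and $H=L$:
\[
\int_1^L \rev_z(Q)\,\frac{dz}{z^2}\le 1 .
\]
For $z\ge z_0:=2(n-1)/r$ (and $z_0\ge n$), Step 1 gives $\rev_z(Q)\ge rz/2$. The left-hand side is therefore at least
\[
\int_{z_0}^L \frac{r}{2z}\,dz=\frac r2\log\frac{L}{z_0},
\]
which tends to infinity as $L\to\infty$. This contradicts the bound of $1$, so $r=0$.

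The step needing the most care is the reduction to a single bidder-level distribution: we must check that universal truthfulness means bidder $1$'s offered price is a random variable whose law depends only on $(v_{-1},\hat p)$ and the seed, not on $v_1$. We also need the bound that the other bidders contribute at most $n-1$, which holds by individual rationality (a bidder pays only if her value is at least the price). Unboundedness of the value domain is exactly what lets $L\to\infty$; with values capped at some $V$, the same computation would instead yield $r=O(1/\log V)$. If invoking the lemma felt indirect, the same bound could be shown directly: for each realized price $P$, $\int_1^\infty P\,\ind\{P\le z\}\,z^{-2}\,dz\le 1$.
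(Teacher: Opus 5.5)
Your proof is correct and follows the paper's argument essentially step for step. You fix the advice and set the other bidders to value $1$, use bid independence to get a single distribution $Q$ for bidder $1$ with $\rev_z(Q)\ge rz-(n-1)\ge rz/2$ for large $z$, and then apply \cref{lem:budget} with $L=H$ (dropping the atom term) to obtain the diverging lower bound $\frac r2\log(L/z_0)$. Your side remarks are also correct: the direct per-price bound $P/\max\{1,P\}\le 1$, and the $O(1/\log V)$ rate when values are capped at $V$.
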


\begin{proof}
Fix the advice and fix the values of bidders $2,\ldots,n$ at $1$.  Let $Q$ be the price distribution
offered to bidder $1$; bid independence makes $Q$ independent of $v_1$.  On the profile
$(h,1,\ldots,1)$, the other bidders contribute at most $n-1$.  If the mechanism guaranteed a
fraction $a>0$ of $\F$, then for all sufficiently large $h$,
\[
    \rev_h(Q)\ge ah-(n-1)\ge \frac a2h.
\]
Apply \cref{lem:budget} with $L=H$ and omit the nonnegative last term. Choose $h_0$ sufficiently large such that $\rev_h(Q)\ge \frac{a}{2}h$ for all $h\ge h_0$. Then,
\[
    1\ge\int_1^H\rev_z(Q) \cdot \frac{dz}{z^2}
      \ge \frac a2\int_{h_0}^H 1 \cdot \frac{dz}{z},
\]
which diverges as $H\to\infty$, a contradiction.
\end{proof}

\subsection{Two-bidder consistency-robustness Pareto frontier}
\label{sec:two}

We first prove the main lower bound for two bidders and then extend to $n$ bidders in \cref{sec:n}. We present the 2-bidder case first because the proof highlights the economic role of the budgeting lemma without the additional notation needed for the product measure in general dimension.

\begin{theorem}[2-bidder Pareto frontier]
\label{thm:two-frontier}
Let $n=2$.  If a universally truthful bid-independent mechanism is $c$-consistent with respect to
$\F$ and $r$-robust with respect to $\Ftwo$, then
\begin{equation}
    c+2r\le1.
    \label{eq:two-frontier}
\end{equation}
\end{theorem}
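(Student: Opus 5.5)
The plan is to fix the prediction at a single large value $\hat p=H$ and apply \cref{lem:budget} to the price distribution offered to each bidder, indexed by the other bidder's bid. Robustness must hold for every prediction, so I may keep $\hat p=H$ on every profile I consider. By bid-independence, bidder $1$ receives a price from a distribution $Q_1(y)$ that depends only on bidder $2$'s bid $y$ and on $H$. Symmetrically, bidder $2$ receives a price from $Q_2(x)$. On profile $(x,y)$ the expected revenue is therefore
\[
\rev_x(Q_1(y))+\rev_y(Q_2(x)).
\]
I will choose $1<L$ and $H>2L$, and let $H\to\infty$ first and then $L\to\infty$.

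\emph{Consistency step.} Take $y\in[1,L]$ and the profile $(H,y)$. The uniform prices earn $\max\{H,2y\}=H$, so $\hat p=H$ is an optimal price and $c$-consistency applies. Bidder $2$ contributes at most $y$, so
\[
\rev_H(Q_1(y))\ge cH-y.
\]
Plugging this into \cref{lem:budget} for $Q_1(y)$ gives
\[
\int_1^L \rev_x(Q_1(y))\,\frac{dx}{x^2}\le 1-c+\frac{y}{H}.
\]
The same bound holds for $Q_2(x)$ with the roles of the bidders swapped. I then integrate over $y\in[1,L]$ against $dy/y^2$ and add the symmetric inequality. Since $\int_1^L y\cdot y^{-2}\,dy=\log L$, this yields
\[
\iint_{[1,L]^2}\bigl(\rev_x(Q_1(y))+\rev_y(Q_2(x))\bigr)\frac{dx\,dy}{x^2y^2}\le 2(1-c)+\frac{2\log L}{H}.
\]

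\emph{Robustness step.} For $(x,y)\in[1,L]^2$ we have $\Ftwo(x,y)=2\min\{x,y\}$. Hence $r$-robustness gives a pointwise lower bound of $2r\min\{x,y\}$ on the same integrand. I integrate this against the same product measure and compute
\[
\iint_{[1,L]^2}\min\{x,y\}\,\frac{dx\,dy}{x^2y^2}.
\]
Under the product of two equal-revenue distributions, $\Pr[\min\ge t]=t^{-2}$ for $t\ge1$. So the untruncated integral equals $\E[\min]=1+\int_1^\infty t^{-2}\,dt=2$. By monotone convergence the truncated integral tends to $2$ as $L\to\infty$, so the left side is at least $4r-o_L(1)$.

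\emph{Conclusion and main obstacle.} Combining the two steps gives $4r-o_L(1)\le 2(1-c)+2\log L/H$. Sending $H\to\infty$ with $L$ fixed and then $L\to\infty$ yields $2r\le 1-c$, which is \eqref{eq:two-frontier}. The step that needs the most care is the order of the two limits. The consistency profiles need $H>2L$ so that the prediction stays correct, and the leakage term $\log L/H$ must vanish. Both are handled by taking $H\to\infty$ before $L$. The only other point to verify is that the truncation error in $\E[\min]$ vanishes, which is routine.
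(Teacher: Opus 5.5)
Your proof is correct and follows essentially the paper's argument. You fix $\hat p=H$, apply \cref{lem:budget} to each bidder's conditional price distribution against the equal-revenue weight on $[1,L]$ plus the atom at $H$, and play robustness on the low square ($\Ftwo=2\min\{x,y\}$, integral $\to 4$) against consistency on the profiles where one value equals $H$. The differences are only bookkeeping: you substitute consistency pointwise into the lemma's atom term, which costs a vanishing $\log L/H$ term for the opponent's revenue share, and you decouple $L$ from $H$. The paper instead integrates total revenue against the product measure $\mu_H^{\otimes 2}$ with $L=H/2$ and takes a single limit $H\to\infty$.
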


\begin{proof}
Fix $H>2$ and set the public prediction to $\hat p=H$.  Define a finite measure on values by
\begin{equation}
    \mu_H(dz)
     \coloneq \ind_{[1,H/2]}(z) \cdot \frac{dz}{z^2}
      +\frac1H\delta_H(dz).
    \label{eq:two-measure}
\end{equation}
Its continuous mass and total mass are
\[
    \ell_H \coloneq 1-\frac2H,
    \qquad
    m_H \coloneq \ell_H+\frac1H=1-\frac1H.
\]

\noindent For bidder $i$, fix the opponent's value and let $Q_i$ be the resulting random price offered to
$i$.  \cref{lem:budget}, with $L=H/2$, states that bidder $i$'s revenue integrated over its
own value under $\mu_H$ is at most one.  Integrating the opponent's value contributes a factor
$m_H$.  Summing over the two bidders gives the upper bound
\begin{equation}
    \int \E[\Rev(\M(v,H))]\,\mu_H^{\otimes2}(dv)
    \le2m_H.
    \label{eq:two-upper}
\end{equation}

\noindent We next lower-bound the same integral on two disjoint classes of profiles. First, if both values lie in $[1,H/2]$, robustness gives
\[
    \E[\Rev(\M(v,H))]\ge r \cdot \Ftwo(v)
       =2r\min\{v_1,v_2\}.
\]
The contribution from this square is at least $rI_{2,H}$, where
\begin{align}
    I_{2,H}
    & \coloneq \int_1^{H/2}\!\int_1^{H/2}
       2\min\{u,s\}\frac{du\,ds}{u^2s^2} \\
    &=4\left(1-\frac2H\right)
      -\frac8H\log\left(\frac H2\right).
    \label{eq:I2H}
\end{align}
The second line follows by symmetry: on the half-square $s\le u$,
$\min\{u,s\}=s$.

\vspace{\baselineskip}

\noindent Second, suppose exactly one value equals $H$ and the other lies in $[1,H/2]$.  Posting $H$
earns $H$, and no uniform price can earn more than $H$ because $2s\le H$.  Thus $H$ is an
optimal price, the advice is correct, and consistency gives revenue at least $cH$.  For each choice
of the high bidder, the atom has mass $1/H$, so this class contributes at least
$c\ell_H$.  There are two choices of the high bidder, for a total consistency contribution
$2c\ell_H$.

All omitted profile classes have nonnegative revenue.  Combining the lower bounds with
\eqref{eq:two-upper} yields
\[
    rI_{2,H}+2c\ell_H\le2m_H.
\]
Let $H\to\infty$.  Then $I_{2,H}\to4$ and $\ell_H,m_H\to1$, so $4r+2c\le2$ as desired.
\end{proof}

\noindent The lower bound is tight. Using cross-pricing as the fallback in our randomized mechanism attains every point on this frontier.

\begin{corollary}[2-bidder achievability]
\label{cor:two-achieve}
For every $\alpha\in[0,1]$, randomizing between the predicted price and cross-pricing yields
\[
    c=\alpha,
    \qquad
    r=\frac{1-\alpha}{2},
\]
and hence achieves equality in \eqref{eq:two-frontier}. 

\end{corollary}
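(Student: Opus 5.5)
The plan is to instantiate \cref{prop:generic-upper} with $\A$ equal to the cross-pricing fallback described in \cref{sec:cross-pricing}. This reduces the corollary to two checks: that cross-pricing is a universally truthful, bid-independent auction, and that it is $2$-competitive against $\Ftwo$.

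\textbf{Truthfulness and bid independence.} Cross-pricing offers bidder $i$ the price $b_{3-i}$, which does not depend on $b_i$. So it is a deterministic bid-independent posted-price mechanism, and hence universally truthful.

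\textbf{Competitive ratio.} For $n=2$ we have $\Ftwo(v)=2v_{(2)}$.
\begin{itemize}
\item If $v_{(1)}>v_{(2)}$, only the top bidder accepts, and pays $v_{(2)}$.
\item If $v_1=v_2$, both bidders accept, and revenue is $2v_{(2)}$.
\end{itemize}
In either case revenue is at least $v_{(2)}=\Ftwo(v)/2$, so $\lambda=2$.

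\textbf{Applying the proposition.} \cref{prop:generic-upper} now gives consistency at least $\alpha$ and robustness at least $(1-\alpha)/2$. Therefore
\[
c+2r\ \ge\ \alpha+(1-\alpha)=1.
\]
Combined with \cref{thm:two-frontier}, which gives $c+2r\le 1$, this yields equality in \eqref{eq:two-frontier}.

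There is no real obstacle here. The only subtlety is that the corollary claims $c=\alpha$ and $r=(1-\alpha)/2$ exactly, not just as lower bounds. Taking the best guarantees the mechanism provably attains, the upper bound from \cref{thm:two-frontier} forces these to be tight as a pair: increasing either guarantee would violate $c+2r\le1$.
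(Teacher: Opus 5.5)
Your proposal is correct and follows the paper's route: instantiate \cref{prop:generic-upper} with the cross-pricing fallback, verify $2$-competitiveness against $\Ftwo$ exactly as the paper does, and note that $\alpha+2\cdot\frac{1-\alpha}{2}=1$ meets \cref{thm:two-frontier} with equality. Your closing remark, that the upper bound forces the mechanism's best guarantees to be exactly $\alpha$ and $(1-\alpha)/2$, is a correct but optional sharpening.
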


\section{The multi-bidder consistency-robustness Pareto frontier}
\label{sec:n}

The 2-bidder proof extends cleanly by writing the hard measure as a product.  The low-value
cutoff becomes $H/n$, ensuring that a single bidder of value $H$ makes $H$ an optimal uniform
price even if all other bidders are as large as the cutoff.

\begin{lemma}[$n$-bidder hard measure]
\label{lem:n-measure}
Fix $H>n$ and define
\begin{equation}
    \mu_{n,H}(dz)
     \coloneq \ind_{[1,H/n]}(z) \cdot \frac{dz}{z^2}
      +\frac1H\delta_H(dz).
    \label{eq:n-measure}
\end{equation}
Let
\[
    \ell_{n,H} \coloneq 1-\frac nH,
    \qquad
    m_{n,H} \coloneq 1-\frac{n-1}{H}
\]
be its continuous and total masses.  For every universally truthful bid-independent mechanism and
fixed prediction $H$,
\begin{equation}
    \int \E[\Rev(\M(v,H))]\,\mu_{n,H}^{\otimes n}(dv)
    \le n m_{n,H}^{n-1}.
    \label{eq:n-upper}
\end{equation}
\end{lemma}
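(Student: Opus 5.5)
The plan is to copy the two-bidder argument from \cref{thm:two-frontier}: split the revenue by bidder, apply \cref{lem:budget} to each bidder separately, and integrate out the other bidders with Tonelli. First, the two measure quantities. The continuous part of $\mu_{n,H}$ has mass $\int_1^{H/n} z^{-2}\,dz = 1-n/H = \ell_{n,H}$. Adding the atom of weight $1/H$ gives total mass $m_{n,H}=1-(n-1)/H$. Since $H>n$, we have $1<H/n<H$, so $L\coloneq H/n$ satisfies the hypothesis $1<L\le H$ of \cref{lem:budget}.

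Next, decompose the revenue. Under universal truthfulness with bid independence, once the prediction $H$ and the other values $v_{-i}$ are fixed, bidder $i$ is offered a random price with some distribution $Q_i(v_{-i})$ that does not depend on $v_i$. Bidder $i$ pays that price exactly when it is at most $v_i$. Linearity of expectation then gives
\[
    \E[\Rev(\M(v,H))]=\sum_{i=1}^n \rev_{v_i}\bigl(Q_i(v_{-i})\bigr).
\]
Correlation of prices across bidders does not affect this identity, because each bidder's expected payment depends only on the marginal distribution of the price that bidder receives.

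Then bound each term. Fix $i$ and $v_{-i}$, and integrate over $v_i$ against $\mu_{n,H}$:
\[
    \int \rev_{z}\bigl(Q_i(v_{-i})\bigr)\,\mu_{n,H}(dz)=\int_1^{H/n}\rev_z(Q_i)\frac{dz}{z^2}+\frac1H\rev_H(Q_i)\le 1,
\]
where the inequality is \cref{lem:budget} with $L=H/n$. Every integrand is nonnegative, so Tonelli lets us integrate in the order $v_i$ first, then $v_{-i}$ against $\mu_{n,H}^{\otimes(n-1)}$. The $v_{-i}$ integration contributes a factor of at most $m_{n,H}^{n-1}$. Summing over the $n$ bidders gives the bound $n\,m_{n,H}^{n-1}$, which is \eqref{eq:n-upper}.

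The main point needing care is measurability of $v_{-i}\mapsto \rev_z(Q_i(v_{-i}))$, which the Tonelli step requires. I would assume the mechanism's pricing rule is jointly measurable, as is implicit in the two-bidder proof, or restrict attention to a finite grid as in \citet{ChenGravinLu2014}. Apart from that, the argument is a direct product-measure extension of the two-bidder case.
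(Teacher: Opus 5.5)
Your proposal is correct and follows the paper's proof essentially step for step. Like the paper, you apply the budgeting lemma with $L=H/n$ to each bidder's bid-independent price distribution $Q_i(v_{-i},H)$, integrate out $v_{-i}$ to pick up the factor $m_{n,H}^{n-1}$, and sum over the $n$ bidders. Your explicit checks (that $1<H/n\le H$, the mass computations, that correlated prices do not matter, and the measurability caveat needed for Tonelli) fill in details the paper leaves implicit.
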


\begin{proof}
Fix bidder $i$ and the other values $v_{-i}$.  Bid independence gives a price distribution
$Q_i(v_{-i},H)$ that does not depend on $v_i$.  Applying \cref{lem:budget} with
$L=H/n$ gives
\[
    \int \rev_{v_i}(Q_i(v_{-i},H))\,\mu_{n,H}(dv_i)\le1.
\]
Integrating over $v_{-i}$ multiplies the right-hand side by the total mass
$m_{n,H}^{n-1}$.  Sum this inequality over $i=1,\ldots,n$.  Expected total revenue is the sum of
the bidders' expected payments, proving \eqref{eq:n-upper}.
\end{proof}

\noindent Using the product-measure budget from Lemma~\ref{lem:n-measure}, we can now extend the two-bidder lower bound to general $n$ and obtain the full consistency-robustness frontier.

\begin{theorem}[$n$-bidder Pareto frontier]
\label{thm:n-frontier}
For every $n\ge2$, if a universally truthful bid-independent mechanism is $c$-consistent with
respect to $\F$ and $r$-robust with respect to $\Ftwo$, then
\begin{equation}
    c+\lambda_n r \le 1.
    \label{eq:n-frontier}
\end{equation}

\end{theorem}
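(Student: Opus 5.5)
We mirror the proof of \cref{thm:two-frontier}, replacing the two-bidder square with the product measure of \cref{lem:n-measure}. Fix $H>n$ and the public prediction $\hat p=H$. \cref{lem:n-measure} already gives the upper bound $\int \E[\Rev(\M(v,H))]\,\mu_{n,H}^{\otimes n}(dv)\le n m_{n,H}^{n-1}$. It therefore remains to lower-bound the same integral on two disjoint families of profiles, send $H\to\infty$, and divide by $n$.

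\emph{Robustness region.} On the cube $[1,H/n]^n$, robustness gives revenue at least $r\,\Ftwo(v)$. This region contributes at least $r I_{n,H}$, where
\[
    I_{n,H}\coloneq\int_{[1,H/n]^n}\Ftwo(v)\prod_{i=1}^n\frac{dv_i}{v_i^2}.
\]
Since $\Ftwo\ge0$, monotone convergence shows that $I_{n,H}$ increases to the full integral over $[1,\infty)^n$ as $H\to\infty$. By the equal-revenue identity \eqref{eq:equal-revenue-identity}, that integral equals $n\lambda_n$. Finiteness is guaranteed by the identity, so no explicit computation like \eqref{eq:I2H} is needed.

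\emph{Consistency region.} For each $i$, consider the profiles with $v_i=H$ and $v_j\in[1,H/n]$ for all $j\ne i$. Posting $H$ earns $H$. We check that no uniform price earns more:
\begin{itemize}[leftmargin=2em]
    \item a price $p\le H/n$ earns at most $np\le H$;
    \item a price $H/n<p\le H$ sells only to bidder $i$ and earns $p\le H$;
    \item a price $p>H$ earns $0$.
\end{itemize}
So $\hat p=H$ is an optimal price, and consistency gives revenue at least $cH$. The atom at $H$ has mass $1/H$ and the other coordinates contribute mass $\ell_{n,H}^{n-1}$. Each $i$ thus contributes at least $c\,\ell_{n,H}^{n-1}$, for a total of $nc\,\ell_{n,H}^{n-1}$.

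These $n+1$ regions are pairwise disjoint, because the cube has no coordinate equal to $H>H/n$ and each consistency region has exactly one such coordinate. Revenue is nonnegative everywhere else. Combining,
\[
    rI_{n,H}+nc\,\ell_{n,H}^{n-1}\le n\,m_{n,H}^{n-1}.
\]
As $H\to\infty$ with $n$ fixed, $\ell_{n,H},m_{n,H}\to1$ and $I_{n,H}\to n\lambda_n$. This yields $n\lambda_n r+nc\le n$, which is \eqref{eq:n-frontier}.

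The main obstacle is justifying the limit $I_{n,H}\to n\lambda_n$, since it rests on the identity \eqref{eq:equal-revenue-identity} quoted from \citet{ChenGravinLu2014}. We take that identity as given; monotone convergence then handles the truncation to $[1,H/n]^n$. The remaining steps are bookkeeping: disjointness of the regions and the fact that the integral of expected revenue equals the sum of bidder payments, which was already used in \cref{lem:n-measure}.
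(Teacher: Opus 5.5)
Your proposal is correct and follows the paper's proof essentially step for step. Both use the same product hard measure and budget upper bound from \cref{lem:n-measure}, the same lower bounds from robustness on the low cube $[1,H/n]^n$ and from consistency on profiles with one coordinate at $H$, and the same limit via monotone convergence and the equal-revenue identity. Your three-case check that $H$ is an optimal price is just an unpacked version of the paper's observation that $k v_{(k)} \le n \cdot \frac{H}{n} = H$ for every $k \ge 2$.
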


\begin{proof}
Fix $H>n$, set $\hat p=H$, and use the product measure from
\cref{lem:n-measure}.  Define
\begin{equation}
    I_{n,H} \coloneq 
    \int_{[1,H/n]^n}
        \Ftwo(v)\prod_{i=1}^n\frac{dv_i}{v_i^2}.
    \label{eq:InH}
\end{equation}

\noindent On the low cube $[1,H/n]^n$, robustness contributes at least $rI_{n,H}$ to the revenue integral.  Next consider profiles with exactly one coordinate equal to $H$ and every other
coordinate in $[1,H/n]$.  For every $k\ge2$,
\[
    k v_{(k)}\le n\frac Hn=H.
\]
Hence posting $H$ earns $H=\F(v)$ and is an optimal price.  Consistency gives revenue at least
$cH$.  For a fixed choice of the high bidder, the atom at $H$ has mass $1/H$ and the other
$n-1$ coordinates have total low mass $\ell_{n,H}^{n-1}$, producing contribution
$c\ell_{n,H}^{n-1}$.  There are $n$ choices for the high bidder.  Therefore,
\begin{equation}
    \int \E[\Rev(\M(v,H))]\,\mu_{n,H}^{\otimes n}(dv)
    \ge rI_{n,H}+nc\ell_{n,H}^{n-1}.
    \label{eq:n-lower}
\end{equation}

\noindent Combining \eqref{eq:n-upper} and \eqref{eq:n-lower},
\begin{equation}
    rI_{n,H}+nc\ell_{n,H}^{n-1}
    \le n m_{n,H}^{n-1}.
    \label{eq:n-finite-H}
\end{equation}
As $H\to\infty$, both $\ell_{n,H}$ and $m_{n,H}$ converge to one.  The low cubes increase to
$[1,\infty)^n$, so monotone convergence and the equal-revenue identity
\eqref{eq:equal-revenue-identity} yield
\[
    I_{n,H}\longrightarrow n\lambda_n.
\]
Taking limits in \eqref{eq:n-finite-H} yields $rn\lambda_n+nc\le n$ as desired.
\end{proof}

\noindent Combining this lower bound with Proposition~\ref{prop:generic-upper} gives a complete characterization. The simple prediction-fallback mixture traces the entire Pareto frontier.

\begin{corollary}[Achieving the Pareto frontier] \label{cor:n-achieve}
    For every
    $\alpha\in[0,1]$, the randomized mechanism $\M_{\alpha,\A_n^*}$ in \cref{prop:generic-upper} satisfies
    \[
        c=\alpha,
        \qquad
        r=\frac{1-\alpha}{\lambda_n},
    \]
    and is therefore Pareto optimal (it achieves equality in \eqref{eq:n-frontier}).
\end{corollary}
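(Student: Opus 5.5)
The corollary follows by combining the achievability bound of \cref{prop:generic-upper} with the matching lower bound of \cref{thm:n-frontier}. Throughout, take $\A=\A_n^*$, the optimal prior-free mechanism of \citet{ChenGravinLu2014}. It is universally truthful and bid-independent, and it is $\lambda_n$-competitive against $\Ftwo$, meaning $\E[\Rev(\A_n^*(v))]\ge\Ftwo(v)/\lambda_n$ for every $v$.

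First, I would instantiate \cref{prop:generic-upper} with $\lambda=\lambda_n$. For each $\alpha\in[0,1]$, this shows that $\M_{\alpha,\A_n^*}$ is universally truthful, $\alpha$-consistent with respect to $\F$, and $\frac{1-\alpha}{\lambda_n}$-robust with respect to $\Ftwo$. Substituting these values gives
\[
c+\lambda_n r=\alpha+\lambda_n\cdot\frac{1-\alpha}{\lambda_n}=1,
\]
so the pair $(c,r)=(\alpha,(1-\alpha)/\lambda_n)$ lies on the line $c+\lambda_n r=1$ and attains equality in \eqref{eq:n-frontier}.

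Second, I would check that $\M_{\alpha,\A_n^*}$ belongs to the class covered by \cref{thm:n-frontier}, namely universally truthful bid-independent mechanisms. In both branches, each bidder's price is independent of their own bid: the prediction branch offers $\hat p$ to everyone, and $\A_n^*$ is bid-independent by construction. The branch choice is made independently of all bids, so the mixture inherits bid independence.

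Third, I would deduce Pareto optimality. \cref{thm:n-frontier} implies that every feasible pair $(c',r')$ satisfies $c'+\lambda_n r'\le1$. Suppose some mechanism weakly dominated $(\alpha,(1-\alpha)/\lambda_n)$ with at least one strict inequality. Then $c'+\lambda_n r'>1$, which contradicts the theorem. Hence no mechanism improves one guarantee without weakening the other. Moreover, as $\alpha$ ranges over $[0,1]$, the achieved pairs sweep the entire segment of the frontier in $[0,1]^2$.

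The only point needing care is whether ``$c=\alpha$, $r=\frac{1-\alpha}{\lambda_n}$'' should be read as exact values or as lower bounds. \cref{prop:generic-upper} gives only lower bounds, $c\ge\alpha$ and $r\ge\frac{1-\alpha}{\lambda_n}$. Exactness is nevertheless forced: by \cref{thm:n-frontier}, any strict improvement in either guarantee would violate $c+\lambda_n r\le1$, so the best guarantees of $\M_{\alpha,\A_n^*}$ are exactly these values.
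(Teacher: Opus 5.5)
Your proposal is correct and follows the paper's argument: the paper proves the corollary simply by instantiating \cref{prop:generic-upper} with $\A=\A_n^*$ and $\lambda=\lambda_n$, then combining it with the lower bound of \cref{thm:n-frontier}, which is exactly what you do. Your additional point is also correct and is left implicit in the paper: since \cref{thm:n-frontier} applies to $\M_{\alpha,\A_n^*}$ itself, it forces the guarantees $c=\alpha$ and $r=(1-\alpha)/\lambda_n$ to be exact rather than mere lower bounds.
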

\section{Error-tolerance in price predictions}
\label{sec:error}

So far, our consistency-robustness results treat two extreme regimes---the prediction is either exactly correct or arbitrarily wrong. A natural question is what revenue can be guaranteed at intermediate error levels, when the prediction $\hat p$ is imperfect but not far from a true optimal price $p^*$. For an optimal price $p^*$, define the \emph{multiplicative prediction error}
\[
\eta(\hat p,p^*) \coloneq \max\left\{\frac{\hat p}{p^*},\frac{p^*}{\hat p}\right\}\ge1.
\]
The mechanism receives $\hat p$, but knows neither $p^*$ nor $\eta$ when it chooses its pricing rule.

It is important to note that underestimation and overestimation have very different consequences. Say every bidder has value $p^*$. If $\hat p=p^*/\eta$, every bidder accepts the underestimate, so offering the prediction earns at least $\F(v)/\eta$. But offering even the smallest overestimate earns zero, since everyone rejects. Any mechanism that degrades gracefully must therefore post prices strictly below the prediction with positive probability. The auction design must reckon with the question of how much probability to spend and at which depths below the prediction.

The budgeting lemma of \cref{sec:budget} characterizes the cost of the aforementioned probability. Each bidder carries one unit of normalized pricing budget, and the Pareto frontier \eqref{eq:n-frontier} spends it on (1) consistency, which reserves mass at the predicted price and (2) robustness, which spreads mass across the continuum of lower values. Error tolerance is the demand that the mechanism also perform at every scale in between, and it draws from the same unit budget.

We measure error on a log-scale for mathematical convenience. In order to guarantee a positive fraction of $F$ on every profile from a prediction that overestimates $p^*$ by a factor $e^s$, the auction must post a price a factor $e^s$ below the prediction, and under the equal-revenue weighting every multiplicative discount is equally expensive. Thus, a unit of probability placed at a deep discount uses the same probability budget, while earning less compared to a shallower discount when the realized error is smaller. Guaranteeing a fraction $a$ of $\F$ against every overestimate up to a factor $R$ therefore costs roughly $a$ at each of the $\log R$ scales in between, or $a\log R$ in total. Adding up all scales, the price of an entire guarantee curve is its area in logarithmic error. The precise accounting of the budget split between consistency, robustness, and (log-)error tolerance is
\[
c+\lambda_n r+\int_0^\infty\rho(s)\,ds\le1,
\]
where $\rho(s)$ is the fraction of $\F$ guaranteed when the prediction is too high by a factor $e^s$. Consistency is the boundary case $s=0$ of the same curve $\rho$, so the inequality says that the area under the error-guarantee curve, together with the robustness term, cannot exceed one unit of budget.

Underestimation requires no additional budget for our matching construction. When $\hat p=p^*/\eta$, every discounted price used to obtain consistency is at most $p^*$, and is therefore accepted by every bidder who would accept $p^*$. Consequently, the construction below automatically obtains a $c/\eta$ guarantee under underestimation. The nontrivial budget tradeoff is therefore on the overestimation side, which is what we characterize below.

We prove the inequality above for every universally truthful bid-independent mechanism in \cref{sec:error-lower-bound}, and then show in \cref{sec:error-construction} that every nonincreasing guarantee curve satisfying it is attained by an explicit mechanism that posts randomly discounted predictions. \cref{sec:error-examples} applies these guarantees to (1) derive same-rate error degradation in both directions (that is, for overestimates and underestimates), (2) obtain tunable error guarantees, and (3) characterize the slowest possible error decay.

\subsection{A necessary condition on error guarantees}
\label{sec:error-lower-bound}

Write $s \coloneq \log(\hat p/p^*)>0$ for the logarithmic error of a strict overestimate, and let
\[
\mathcal P^*(v) \coloneq \arg\max_{p\ge0}\Rev(v;p)
\]
denote the set of optimal uniform prices.

\begin{definition}[Revenue guarantee under overestimation]
\label{def:intrinsic-profile}
A mechanism $M$ \emph{guarantees} a measurable function $\rho:(0,\infty)\to[0,\infty)$ if, for every valuation profile $v$, every $p^*\in\mathcal P^*(v)$, and every $s>0$,
\begin{equation}
\E[\Rev(M(v,e^sp^*))]\ge\rho(s)\F(v).
\label{eq:overestimate-guarantee}
\end{equation}
The expectation is over the mechanism's internal randomization.
\end{definition}

Thus, $\rho(s)$ is the fraction of optimal uniform-price revenue the mechanism promises when the prediction overestimates an optimal price by a factor $e^s$. We emphasize that $\rho(s)$ represents a promise rather than an exact worst case. Any pointwise lower bound $\rho$ is admissible, but larger choices of $\rho$ yield correspondingly stronger implications from Theorem~\ref{thm:unknown-error-lb}. The definitions for consistency $c$ and robustness $r$ are the same as in \cref{def:guarantees}, with consistency playing the role of the boundary case $s=0$.

\begin{theorem}[Consistency, robustness, and prediction error]
\label{thm:unknown-error-lb}
Suppose a universally truthful bid-independent mechanism is $c$-consistent, $r$-robust, and guarantees $\rho$ under overestimation. Then
\begin{equation}
c+\lambda_n r+\int_0^\infty\rho(s)\,ds\le1.
\label{eq:functional-frontier}
\end{equation}
\end{theorem}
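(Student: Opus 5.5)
The plan is to rerun the product-measure argument behind \cref{thm:n-frontier}, with one change. The hard measure now carries equal-revenue density on the whole range $[1,H]$, not just on the low range $[1,H/n]$. The intermediate band of single-bidder values between the low cube and the atom at $H$ is then charged to the overestimation guarantee $\rho$. The budget is exactly the same as before. Applying \cref{lem:budget} with $L=H$ gives
\[
\int_1^H \rev_z(Q)\,\frac{dz}{z^2}+\frac1H\rev_H(Q)\le 1
\]
for every price distribution $Q$. So I would fix $H$ large, set the prediction $\hat p=H$, and use the measure
\[
\mu(dz)=\ind_{[1,H]}(z)\,\frac{dz}{z^2}+\frac1H\delta_H(dz),
\]
whose total mass is exactly $1$. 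As in \cref{lem:n-measure}, fixing $v_{-i}$, applying the budgeting lemma to bidder $i$'s bid-independent price distribution, integrating over $v_{-i}$ and summing over bidders gives
\[
\int \E[\Rev(\M(v,H))]\,\mu^{\otimes n}(dv)\le n .
\]

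Next, fix a cutoff $S>0$ with $He^{-S}>n$, and lower-bound the same integral on three disjoint classes of profiles.
\begin{enumerate}[leftmargin=2em]
\item \textbf{Low cube.} On $[1,He^{-S}/n]^n$, robustness contributes at least $r\,I_{n,He^{-S}}$, with $I$ as in \eqref{eq:InH}.
\item \textbf{Intermediate band.} Take profiles with exactly one bidder at a value $z\in(He^{-S},H)$ and all others in $[1,He^{-S}/n]$. For every $k\ge2$ we have $k v_{(k)}\le n\cdot He^{-S}/n<z$. So $z\in\mathcal P^*(v)$ and $\F(v)=z$. The prediction is $H=e^{s}z$ with $s=\log(H/z)>0$, and \cref{def:intrinsic-profile} gives revenue at least $\rho(\log(H/z))\,z$. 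Integrating against $dz/z^2$ and substituting $s=\log(H/z)$, so that $dz/z=-ds$, this band contributes
\[
\ell^{\,n-1}\int_{He^{-S}}^{H}\rho\!\left(\log\tfrac Hz\right)\frac{dz}{z}=\ell^{\,n-1}\int_0^S\rho(s)\,ds
\]
for each choice of the high bidder. Here $\ell\coloneq\int_1^{He^{-S}/n}dz/z^2$ is the low-cube mass per coordinate.
\item \textbf{Atom.} The profiles with one bidder at the atom $H$ and the others low are handled exactly as in \cref{thm:n-frontier}. They contribute $c\,\ell^{\,n-1}$ per high bidder.
\end{enumerate}
All remaining profiles, for example those with two high coordinates, contribute nonnegatively and are dropped. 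Combining the classes gives
\[
r\,I_{n,He^{-S}}+n\,\ell^{\,n-1}\Bigl(c+\int_0^S\rho(s)\,ds\Bigr)\le n .
\]

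To pass to the limit, first send $H\to\infty$ with $S$ fixed. Then $\ell\to1$, and $I_{n,He^{-S}}\to n\lambda_n$ by monotone convergence and \eqref{eq:equal-revenue-identity}. This yields $n\lambda_n r+nc+n\int_0^S\rho\le n$. Letting $S\to\infty$ by monotone convergence, since $\rho\ge0$, and dividing by $n$ gives \eqref{eq:functional-frontier}.

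The step needing the most care is the intermediate band. There I must check that $z$ is genuinely an optimal price. This is why the low cube is cut at $He^{-S}/n$ rather than $H/n$, and why $S$ must be held fixed while $H\to\infty$: a single limit would not make both the low cube fill out $[1,\infty)^n$ and the band cover all $s\in(0,S)$. I must also check that the band, the atom and the low cube are disjoint, so that no budget is double counted. Measurability of $\rho$ is assumed in \cref{def:intrinsic-profile}, so the change of variables is legitimate.
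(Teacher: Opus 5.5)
Your proof is correct and is essentially the paper's argument. Both use the same three disjoint profile classes against an equal-revenue product measure: the low cube charged to robustness, one coordinate in a band charged to $\rho$ via $s=\log(H/z)$, and one coordinate at the atom $H$ charged to consistency. There are two small differences from the paper. First, the paper deletes the unused gap $(K,nK)$ from the measure and checks the per-price budget directly, whereas you keep density on all of $[1,H]$ and invoke \cref{lem:budget} with $L=H$. Second, the paper sets $H=nK^2$, which is exactly your construction with $S=\log K$ tied to $H$, so a single limit $K\to\infty$ suffices. Your closing claim that one limit cannot work is therefore inaccurate, though your double limit is equally valid.
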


Setting $\rho\equiv0$ recovers the consistency-robustness frontier \eqref{eq:n-frontier}, so this theorem strictly strengthens our previous results. The integral is precisely the additional budget that graceful degradation costs, and a mechanism can afford a positive guarantee at intermediate errors only by giving up consistency, robustness, or both.

The integral measures the size of the guarantee and the range of errors over which it is demanded at the same time. In terms of multiplicative error,
\[
\int_0^\infty\rho(s)\,ds=\int_1^\infty\rho(\log\eta)\frac{d\eta}{\eta},
\]
so maintaining a fraction $a$ of $\F$ for every overestimate factor $1<\eta\le R$ requires
\[
c+\lambda_n r+a\log R\le1.
\]
As a result, error tolerance that is either stronger or spread over a wider multiplicative range leaves less room for consistency and robustness guarantees.

The proof extends the budgeting argument of \cref{sec:budget} by integrating revenue over three disjoint classes of profiles: profiles where all values are small which charge the mechanism for robustness, profiles with one larger value and an overestimated prediction which charge for $\rho$, and profiles where the prediction is correct which charge for consistency. The proof below demonstrates that a single upper bound on integrated revenue must accommodate all three.

\begin{proof}
Fix $K>1$, set $H \coloneq nK^2$, and fix the public prediction to $\hat p=H$. Define the finite measure
\begin{equation}
\mu_K(dz) \coloneq \ind_{[1,K]}(z)\frac{dz}{z^2}+\ind_{[nK,H)}(z)\frac{dz}{z^2}+\frac1H\delta_H(dz).
\label{eq:three-band-measure}
\end{equation}
Its total mass is
\[
m_K=1-\left(1-\frac1n\right)\frac1K.
\]
For a deterministic posted price $P$, let $B_K(P) \coloneq P\mu_K([P,\infty))$. Direct calculation gives
\[
B_K(P)=
\begin{cases}
P\left(1-\left(1-\frac1n\right)\frac1K\right)\le1,&0<P\le1,\\[1mm]
1-\left(1-\frac1n\right)\frac{P}{K}\le1,&1<P\le K,\\[1mm]
P/(nK)\le1,&K<P<nK,\\[1mm]
1,&nK\le P\le H,\\[1mm]
0,&P>H.
\end{cases}
\]
Conditional on the other bids $v_{-i}$, bid independence gives bidder $i$ a random price distribution independent of $v_i$. Averaging the preceding bound over that random price, integrating over $v_{-i}$, and summing over bidders yields\looseness-1
\begin{equation}
\int\E[\Rev(M(v,H))]\,\mu_K^{\otimes n}(dv)\le nm_K^{n-1}.
\label{eq:three-band-upper}
\end{equation}

We lower-bound the same integral on three disjoint profile classes. Let $\ell_K \coloneq 1-1/K$.

\paragraph{Robustness.}
On $[1,K]^n$, robustness contributes at least
\[
r\int_{[1,K]^n}\Ftwo(v)\prod_{i=1}^n\frac{dv_i}{v_i^2}.
\]

\paragraph{Overestimation.}
Suppose one coordinate equals $z\in[nK,H)$ and the remaining coordinates lie in $[1,K]$. Posting $z$ earns $z$. Every price serving at least two bidders earns at most $nK\le z$, while any price above $K$ serves only the high bidder and earns at most $z$. Hence $\F(v)=z$ and $z$ is an optimal price. This includes the boundary $z=nK$, where $z$ merely ties with the prices that serve all $n$ bidders; \eqref{eq:overestimate-guarantee} is required at every optimal price, so ties are harmless. Since the prediction is $H$, the log-error of this strict overestimate is $\log(H/z)$. Applying \eqref{eq:overestimate-guarantee} and accounting for the $n$ possible choices of the high bidder, these profiles contribute at least
\[
n\ell_K^{n-1}\int_{nK}^{H}\rho\!\left(\log\frac{H}{z}\right)\frac{dz}{z}=n\ell_K^{n-1}\int_0^{\log K}\rho(s)\,ds.
\]
This change of variables formalizes the intuition sketched previously: the revenue guarantese contributes a factor $z$ and the equal-revenue density contributes $z^{-2}$, so their product is $dz/z$, which is exactly $ds$ in logarithmic error. 

\paragraph{Consistency.}
Suppose one coordinate equals $H$ and the remaining coordinates lie in $[1,K]$. Since $H=nK^2>nK$, posting $H$ earns $H$ and is optimal. The prediction is correct, and the atom contributes at least
\[
nc\ell_K^{n-1}.
\]

Combining the three lower bounds with \eqref{eq:three-band-upper},
\[
r\int_{[1,K]^n}\Ftwo(v)\prod_{i=1}^n\frac{dv_i}{v_i^2}+n\ell_K^{n-1}\int_0^{\log K}\rho(s)\,ds+nc\ell_K^{n-1}\le nm_K^{n-1}.
\]
As $K\to\infty$, both $\ell_K$ and $m_K$ converge to one. By monotone convergence and \eqref{eq:equal-revenue-identity}, the first term converges to $rn\lambda_n$ and the second to $n\int_0^\infty\rho(s)\,ds$. Dividing by $n$ proves the theorem.
\end{proof}

An immediate consequence is that a mechanism that exhausts its budget on perfect consistency and robustness cannot retain any protection against strict overestimation. This is formalized in the following corollary:

\begin{corollary}[Performance near a correct prediction]
\label{cor:endpoint-brittle}
Suppose a mechanism is $c$-consistent and $r$-robust, with $c+\lambda_n r=1$. There are no constants $a>0$ and $S>0$ for which
\[
\E[\Rev(M(v,e^sp^*))]\ge a\F(v)
\]
holds for every profile, every optimal price $p^*$, and every $0<s\le S$.
\end{corollary}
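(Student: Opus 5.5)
The plan is to argue by contradiction from \cref{thm:unknown-error-lb}. Suppose there were constants $a>0$ and $S>0$ such that
\[
\E[\Rev(M(v,e^sp^*))]\ge a\F(v)
\]
for every profile $v$, every $p^*\in\mathcal P^*(v)$, and every $0<s\le S$. Define the measurable function
\[
\rho(s)\coloneq a\,\ind\{0<s\le S\}.
\]
By \cref{def:intrinsic-profile}, $M$ then guarantees $\rho$ under overestimation. For $s\in(0,S]$ this is exactly the assumed inequality. For $s>S$ the requirement reduces to $\E[\Rev]\ge0$, which holds trivially because revenues are nonnegative. Note that $\rho$ need not be monotone or the best possible bound, since the paper stresses that any pointwise lower bound is admissible.

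Next we invoke \cref{thm:unknown-error-lb}. The mechanism is $c$-consistent, $r$-robust, and guarantees $\rho$, so \eqref{eq:functional-frontier} gives
\[
c+\lambda_n r+\int_0^\infty \rho(s)\,ds\le1.
\]
The integral equals $aS>0$. Substituting the hypothesis $c+\lambda_n r=1$ yields $1+aS\le1$, which is a contradiction. Hence no such constants $a,S$ exist.

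There is essentially no obstacle here; the whole argument is bookkeeping. The only points to check are these:
\begin{itemize}[leftmargin=2em]
\item The hypothesized guarantee quantifies over every optimal price $p^*$, exactly as \cref{def:intrinsic-profile} requires.
\item The indicator function is measurable.
\end{itemize}
Both hold, so the corollary follows directly from the theorem.
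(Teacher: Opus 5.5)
Your proof is correct and is the paper's argument: you take $\rho(s)=a\ind\{0<s\le S\}$ in \cref{thm:unknown-error-lb} and obtain the contradiction $1+aS\le 1$. Your side checks (the guarantee holds trivially for $s>S$, $\rho$ is measurable and need not be monotone, and the hypothesis quantifies over every optimal price) are what is needed. They rely on the implicit standing assumption that $M$ is universally truthful and bid-independent, which the theorem requires.
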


\begin{proof}
Such a guarantee would allow $\rho(s)=a\ind\{0<s\le S\}$ in \cref{thm:unknown-error-lb}, giving $1+aS\le1$, a contradiction.
\end{proof}

The brittleness of posting $\hat p$ exactly is therefore not specific to our particular auction. Also, note that the statement concerns uniform error tolerance over an interval of errors, so it does not state that revenue is zero at any specific error or on any individual profile.

\subsection{Randomized discounting mechanism}
\label{sec:error-construction}

We now present a random-discount mechanism that attains the guarantee prescribed by our lower bound in Theorem~\ref{thm:unknown-error-lb}. The mechanism is a direct generalization of the prediction branch of \cref{def:mixture}. Instead of always posting $\hat p$, the mechanism posts randomly discounted predictions
\[
P=\hat pe^{-u},\qquad u\ge0,
\]
and runs $\mathcal{A}_n^*$ with probability $\lambda_n r$, which guarantees $r\Ftwo(v)$ on every profile. The probability mass placed at log-discount $u$ is what allows us to extract revenue from predictions that have overestimation error at most $u$, but it earns only an $e^{-u}$ fraction of the predicted price when it does, showing that error tolerance on large scales depletes the probability budget in exactly the way the lower bound predicts. Choosing the distribution of $u$ is then a matter of reading off from the target curve $\rho$ how much mass each discount requires. Randomized discounting of this kind also appears in the error-tolerant constructions of~\citet{BalcanPrasadSandholm2023} and~\citet{LuWanZhang2024}.

\begin{theorem}[Randomized discounting is Pareto optimal]
\label{thm:functional-achievability}
Fix $c,r\in[0,1]$ and a nonincreasing, left-continuous function $\rho:(0,\infty)\to[0,1]$ with
\[
\rho_0 \coloneq \lim_{s\to0^+}\rho(s)\le c.
\]
If
\begin{equation}
c+\lambda_n r+\int_0^\infty\rho(s)\,ds\le1,
\label{eq:target-profile-budget}
\end{equation}
then there is a universally truthful mechanism, using only the prediction and the current bids, with consistency at least $c$ and robustness at least $r$. For every optimal price $p^*$ and strict overestimate $\hat p=e^sp^*$, it satisfies
\[
\E[\Rev(M(v,\hat p))]\ge r\Ftwo(v)+\rho(s)\F(v),
\]
and for every underestimate $p^*=\eta\hat p$ with $\eta\ge1$, it satisfies
\[
\E[\Rev(M(v,\hat p))]\ge r\Ftwo(v)+\frac{c}{\eta}\F(v).
\]
\end{theorem}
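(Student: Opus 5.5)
The plan is to build the mechanism explicitly as a three-way randomization, all chosen independently of the bids:
\begin{itemize}
\item with probability $\lambda_n r$, run $\A_n^*$;
\item with probability $\gamma \coloneq c+\int_0^\infty\rho(s)\,ds$, post the uniform price $P=\hat p e^{-U}$ to every bidder, where the log-discount $U\ge0$ is drawn from a measure $G$ constructed from $\rho$;
\item with the remaining probability $1-\lambda_n r-\gamma\ge0$, which is nonnegative by \eqref{eq:target-profile-budget}, post $\infty$ and earn nothing.
\end{itemize}
Universal truthfulness follows as in \cref{prop:generic-upper}. Each branch offers every bidder a price that does not depend on their own bid, and the branch is chosen independently of the bids. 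The $\A_n^*$ branch contributes $\lambda_n r\cdot\Ftwo(v)/\lambda_n=r\Ftwo(v)$ on every profile. This gives robustness $r$ and the additive $r\Ftwo(v)$ term in both displayed bounds. Everything else concerns the discount branch.

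The key observation is about a posted price $p\le p^*$ when $p^*$ is optimal. Every bidder who buys at $p^*$ also buys at $p$, so the revenue is at least $p\,N_v(p^*)=(p/p^*)\F(v)$. Take an overestimate $\hat p=e^sp^*$. A discount $u\ge s$ gives price $p^*e^{s-u}\le p^*$, so the discount branch earns at least
\[
\F(v)\int_{[s,\infty)}e^{s-u}\,G(du).
\]
We therefore want a measure $G$ (unnormalized, of total mass $\gamma$) with
\[
\int_{[s,\infty)}e^{-u}\,G(du)=e^{-s}\rho(s)\quad\text{for all }s>0,
\]
and with $\int_{[0,\infty)} e^{-u}\,G(du)\ge c$ at $s=0$. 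Formally differentiating $e^{-s}\rho(s)$ suggests $G(du)=\rho(u)\,du-d\rho(u)$. This leads to the choice
\[
G \coloneq (c-\rho_0)\,\delta_0+\rho(u)\,du+\nu,
\]
where $\nu$ is the Lebesgue--Stieltjes measure on $[0,\infty)$ with $\nu([s,\infty))=\rho(s)-\rho(\infty)$ and $\rho(0)\coloneq\rho_0$. Here $\nu$ is a nonnegative measure because $\rho$ is nonincreasing. Left-continuity of $\rho$ is exactly what makes $\nu([s,\infty))$, a closed tail, equal to $\rho(s)-\rho(\infty)$. The atom at $0$ has mass $c-\rho_0\ge0$ by the hypothesis $\rho_0\le c$.

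For the mass accounting: if $\rho(\infty)>0$ then $\int\rho=\infty$, which contradicts \eqref{eq:target-profile-budget}. Hence $\rho(\infty)=0$, and the total mass of $G$ is $(c-\rho_0)+\int_0^\infty\rho+\rho_0=\gamma$, as required.

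The main obstacle is the tail identity, which is a Stieltjes integration by parts. Writing $\int_{[s,\infty)}e^{s-u}\,\nu(du)$ via Fubini as $\int_{[s,\infty)}\bigl(\int_s^u e^{s-t}\,dt\bigr)\nu(du)$ plus a boundary term gives
\[
\int_{[s,\infty)}e^{s-u}\,\nu(du)=\rho(s)-\int_s^\infty e^{s-u}\rho(u)\,du.
\]
Adding the contribution $\int_s^\infty e^{s-u}\rho(u)\,du$ of the density part then yields exactly $\rho(s)\F(v)$. The atom at $0$ contributes nothing for $s>0$. I would carry this out carefully with Tonelli, using $\nu([t,\infty))=\rho(t)$ for every $t\ge s$ and $\rho(\infty)=0$.

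For consistency ($s=0$, $\hat p=p^*$), all discounted prices are at most $p^*$. The same computation at $s=0$ gives $\rho_0$ from the density and $\nu$ parts, plus $c-\rho_0$ from the atom, for a total of at least $c\,\F(v)$.

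For underestimates $\hat p=p^*/\eta$, every discounted price is at most $p^*$. The branch then earns at least $\F(v)\,\eta^{-1}\int e^{-u}\,G(du)\ge (c/\eta)\F(v)$, reusing the $s=0$ computation. Adding the $r\Ftwo(v)$ contribution from the $\A_n^*$ branch completes all three claimed bounds.
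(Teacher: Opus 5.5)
Your proposal is correct and follows the paper's proof essentially step for step. It uses the same three-way randomization: the fallback with probability $\lambda_n r$, the posted prices $\hat p e^{-u}$ under the measure $(c-\rho_0)\delta_0+\rho(u)\,du+(-d\rho)$, and no sale otherwise. It also rests on the same Stieltjes integration-by-parts identity $\int_{[s,\infty)}e^{-u}\,G(du)=e^{-s}\rho(s)$. The only differences are cosmetic: you check that identity directly via Tonelli (the boundary term $\rho(s)$ enters with the Fubini integral \emph{subtracted}, not added), and you handle $s=0$ by setting $\rho(0)=\rho_0$ rather than taking $s\to 0^+$.
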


\begin{proof}
We construct a measure $\nu$ on $[0,\infty)$ whose mass specifies the probability assigned to each log-discount $u$. Unlike a conditional probability distribution, its total mass may be less than one, because the mechanism also runs the fallback.

Let $-d\rho$ be the positive Lebesgue--Stieltjes measure defined by
\[
(-d\rho)([a,b))=\rho(a)-\rho(b),\qquad0<a<b.
\]
For a differentiable function this measure has density $-\rho'(u)$, and a downward jump from $\rho(u)$ to $\rho(u+)$ contributes an atom of mass $\rho(u)-\rho(u+)$ at $u$. Define
\begin{equation}
\nu(du) \coloneq \rho(u)\,du+(-d\rho)(du)+(c-\rho_0)\delta_0(du),
\label{eq:stieltjes-measure}
\end{equation}
where the first two measures are supported on $(0,\infty)$. All three terms are nonnegative. Since $\rho$ is nonincreasing with finite integral, $\rho(u)\to0$ as $u\to\infty$, so $(-d\rho)((0,\infty))=\rho_0$ and
\[
\nu([0,\infty))=c+\int_0^\infty\rho(u)\,du\le1-\lambda_n r.
\]

The mechanism runs $\mathcal{A}_n^*$ with probability $\lambda_n r$. For probability mass $\nu(du)$ it posts $\hat pe^{-u}$ to every bidder, and any remaining probability is assigned to no sale. These choices are independent of the current bids and the fallback is universally truthful, so the resulting mechanism is universally truthful.

We next verify the revenue guarantees. Stieltjes integration by parts gives, for $s>0$,
\[
\int_{[s,\infty)}e^{-u}(-d\rho)(du)=e^{-s}\rho(s)-\int_s^\infty e^{-u}\rho(u)\,du,
\]
and therefore
\begin{equation}
\int_{[s,\infty)}e^{-u}\nu(du)=e^{-s}\rho(s).
\label{eq:stieltjes-identities}
\end{equation}
Taking $s\to0^+$ and adding the atom at zero gives $\int_{[0,\infty)}e^{-u}\nu(du)=c$.

Fix an optimal price $p^*$ and let $k^* \coloneq N_v(p^*)$, so that $\F(v)=k^*p^*$. If $\hat p=e^sp^*$, every price with $u\ge s$ is at most $p^*$ and is accepted by at least these $k^*$ bidders, so the discounted prices contribute at least
\[
k^*\hat p\int_{[s,\infty)}e^{-u}\nu(du)=k^*\hat pe^{-s}\rho(s)=\rho(s)\F(v).
\]
If $p^*=\eta\hat p$ with $\eta\ge1$, every discounted price is at most $p^*$, and their contribution is at least
\[
k^*\hat p\int_{[0,\infty)}e^{-u}\nu(du)=\frac{c}{\eta}\F(v).
\]
The fallback contributes at least $r\Ftwo(v)$ in every case, and at $\eta=1$ we obtain consistency at least $c$.
\end{proof}

The construction accounts for the three terms of \eqref{eq:functional-frontier} one at a time. The fallback requires probability $\lambda_n r$, the discounted prices require total probability $c+\int_0^\infty\rho(s)\,ds$, and the two together cannot exceed one. The regularity assumptions on $\rho$ are mild in the following sense. A guarantee is only meaningful over a range of errors, and any function guaranteed by a mechanism may be replaced by the largest nonincreasing left-continuous function below it without changing what the mechanism delivers over such a range.

Together, \cref{thm:unknown-error-lb} and \cref{thm:functional-achievability} turn the frontier \eqref{eq:n-frontier} into an exact characterization for any nonincreasing left-continuous error tolerance profile $\rho$ with $\rho_0\le c$. A triple consisting of consistency $c$, robustness $r$, and a nonincreasing overestimation guarantee $\rho$ is achievable if and only if
\[
c+\lambda_n r+\int_0^\infty\rho(s)\,ds\le1.
\]
The mechanism attaining it additionally guarantees $(c/\eta)\F(v)$ under underestimation.

\subsection{Example applications}
\label{sec:error-examples}

We now apply our guarantee to (1) derive same-rate error degradation in both directions (that is, for overestimates and underestimates), (2) obtain tunable error guarantees, and (3) characterize the slowest possible error decay.

\paragraph{Uniform discounts and symmetric error.}
The simplest error-tolerant mechanism draws its price uniformly below the prediction. It guarantees a fraction of $\F$ proportional to $1/\eta$ in both directions, and no mechanism with the same robustness can give a better revenue guarantee.

\begin{corollary}
\label{cor:optimal-scale-free}
Suppose a universally truthful bid-independent mechanism is $r$-robust and guarantees
\[
\E[\Rev(M(v,\hat p))]\ge\frac{a}{\eta(\hat p,p^*)} \cdot \F(v)
\]
for every profile, prediction, and optimal price $p^*$. Then
\begin{equation}
a\le\frac{1-\lambda_n r}{2}.
\label{eq:symmetric-optimum}
\end{equation}
For every $r\in[0,1/\lambda_n]$ the bound is attained by running $\mathcal{A}_n^*$ with probability $\lambda_n r$ and otherwise drawing $P$ uniformly from $(0,\hat p]$, which satisfies
\begin{equation}
\E[\Rev]\ge r \cdot \Ftwo(v)+\frac{1-\lambda_n r}{2\eta(\hat p,p^*)} \cdot \F(v).
\label{eq:optimal-symmetric-guarantee}
\end{equation}
\end{corollary}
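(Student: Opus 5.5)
The argument has two parts. Necessity of \eqref{eq:symmetric-optimum} comes from \cref{thm:unknown-error-lb}. Achievability comes from checking the uniform-discount mechanism directly, or equivalently from reading it as an instance of \cref{thm:functional-achievability}.

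\emph{Upper bound.} Suppose the mechanism guarantees $\frac{a}{\eta}\F(v)$ for every prediction. At a correct prediction ($\eta=1$) this gives consistency $c\ge a$. For a strict overestimate $\hat p=e^s p^*$ we have $\eta=e^s$, so the mechanism guarantees $\rho(s)=a e^{-s}$ in the sense of \cref{def:intrinsic-profile}. Plugging $c=a$, the given $r$, and this $\rho$ into \eqref{eq:functional-frontier} yields
\[
a+\lambda_n r+\int_0^\infty a e^{-s}\,ds = 2a+\lambda_n r\le 1,
\]
which is exactly \eqref{eq:symmetric-optimum}. The only check is that $\rho$ is measurable and that $c\ge a$ is a legitimate consistency level. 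Both are immediate, since \cref{def:guarantees} only needs a lower bound.

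\emph{Achievability.} Fix $r\in[0,1/\lambda_n]$. With probability $\lambda_n r$ the mechanism runs $\A_n^*$. With the remaining probability $1-\lambda_n r$ it posts a single price $P\sim\mathrm{Unif}(0,\hat p]$ to every bidder. Universal truthfulness follows as in \cref{prop:generic-upper}, because the branch choice and $P$ are independent of the bids.

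For revenue, fix an optimal price $p^*$ and set $k^*=N_v(p^*)$. Any posted $P\le p^*$ sells to at least $k^*$ bidders, so the uniform branch earns at least $k^*\,\E[P\,\ind\{P\le p^*\}]$.
\begin{itemize}[leftmargin=2em]
\item \emph{Overestimate} ($\hat p=\eta p^*$): this expectation is $\frac{(p^*)^2}{2\hat p}$, so the branch earns at least $k^*\frac{(p^*)^2}{2\hat p}=\frac{\F(v)}{2\eta}$.
\item \emph{Underestimate} ($\hat p\le p^*$): every $P$ is accepted by those $k^*$ bidders, so the branch earns at least $k^*\frac{\hat p}{2}=\frac{\F(v)}{2\eta}$.
\end{itemize}
Multiplying by $1-\lambda_n r$ and adding the fallback's guarantee of $(\lambda_n r)\Ftwo(v)/\lambda_n=r\Ftwo(v)$ gives \eqref{eq:optimal-symmetric-guarantee}. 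With $a=\frac{1-\lambda_n r}{2}$ this attains the bound.

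As a cross-check, this mechanism is \cref{thm:functional-achievability} with $c=\rho_0=\frac{1-\lambda_n r}{2}$ and $\rho(s)=c e^{-s}$. The measure \eqref{eq:stieltjes-measure} then has density $2ce^{-u}$ in the log-discount $u$, which is exactly the law of $\log(\hat p/P)$ when $P$ is uniform on $(0,\hat p]$. The direct computation above is nonetheless self-contained, and it avoids needing the underestimate guarantee $c/\eta$ from that theorem separately.

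I do not expect a real obstacle. The one point needing care is that the hypothesis at $\eta=1$ must be read as consistency with $c=a$, so that the consistency term and the integral term each contribute $a$.
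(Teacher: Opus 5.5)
Your proposal is correct and matches the paper's proof essentially step for step. The upper bound instantiates \cref{thm:unknown-error-lb} with $c=a$ and $\rho(s)=ae^{-s}$, and achievability uses the same two-case computation (under- and overestimate) for the uniform branch, plus the fallback's $r\Ftwo(v)$; your cross-check via the measure $2ce^{-u}\,du$ in \cref{thm:functional-achievability} is also right, and is the $b=1$ case the paper notes afterwards.
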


\begin{proof}
At $\eta=1$ the assumed guarantee gives consistency at least $a$, and for strict overestimates we may take $\rho(s)=ae^{-s}$. \cref{thm:unknown-error-lb} therefore gives
\[
1\ge a+\lambda_n r+\int_0^\infty ae^{-s}\,ds=2a+\lambda_n r.
\]
For achievability, write $q \coloneq 1-\lambda_n r$ and $k^* \coloneq N_v(p^*)$. If $\hat p\le p^*$, every sampled price is accepted by at least $k^*$ bidders, so the prediction branch earns at least
\[
qk^*\frac{\hat p}{2}=\frac{q}{2\eta}\F(v).
\]
If $\hat p>p^*$, counting only the sampled prices that are at most $p^*$ gives
\[
\frac{qk^*}{\hat p}\int_0^{p^*}p\,dp=\frac{qk^*(p^*)^2}{2\hat p}=\frac{q}{2\eta}\F(v).
\]
Adding the fallback contribution proves the result.
\end{proof}

The factor $1/2$ has a direct interpretation. A uniform discount earns half the predicted price per guaranteed buyer on average, and if it is an overestimate, only the part of the price range lying below $p^*$ counts toward the guarantee. The lower bound says that this is the best exchange rate available, so no mechanism improves the coefficient of a symmetric $1/\eta$ guarantee without weakening robustness.

\paragraph{Tuning the rate of degradation.}
Uniform discounting fixes the rate at which the guarantee decays. More generally, say we would like to achieve $a/\eta^b$ under overestimation for an exponent $b>0$. In logarithmic coordinates this is $\rho(s)=ae^{-bs}$, whose integral is $a/b$, so \cref{thm:unknown-error-lb} says that any $c$-consistent, $r$-robust mechanism with $0\le a\le c$ satisfies
\begin{equation}
c+\frac ab+\lambda_n r\le1.
\label{eq:power-law-frontier}
\end{equation}
Conversely, whenever \eqref{eq:power-law-frontier} holds, \cref{thm:functional-achievability} attains these guarantees using the log-discount measure $\nu(ds)=a(1+b)e^{-bs}\,ds+(c-a)\delta_0(ds)$, and simultaneously guarantees $(c/\eta)\F(v)$ under underestimation.

We let $a=c$. When $a<c$, the mechanism posts $\hat p$ itself with probability $c-a$, which earns nothing once $\hat p>p^*$, so the guarantee drops from $c$ to $a$ even under the smallest overestimate. With $a=c$, $\hat p$ is never offered, so the guarantee is continuous in $\eta$ at a correct prediction. Then, fix $r\in[0,1/\lambda_n]$ and write $q \coloneq 1-\lambda_n r$. The largest consistency in this family is
\begin{equation}
c_b^* \coloneq q\frac{b}{b+1},
\label{eq:power-law-corner}
\end{equation}
achieved by running the fallback with probability $\lambda_n r$ and otherwise posting $Z\hat p$, where $\Pr(Z\le z)=z^b$ for $0\le z\le1$, that is, $Z\sim\mathrm{Beta}(b,1)$. The resulting guarantee is
\[
\E[\Rev]\ge r\Ftwo(v)+c_b^*
\begin{cases}
\eta^{-1}\F(v),&\hat p\le p^*,\\[1mm]
\eta^{-b}\F(v),&\hat p>p^*.
\end{cases}
\]
Larger $b$ places more probability on discounts close to one, which improves consistency but makes the overestimation guarantee decay faster; smaller $b$ protects against larger overestimates at the cost of consistency; and $b=1$ recovers uniform discounting. Writing $s=\log(\hat p/p^*)$ for the signed logarithmic error and
\[
\Phi_b(s) \coloneq c_b^*
\begin{cases}
e^s,&s\le0,\\
e^{-bs},&s>0
\end{cases}
\]
for the coefficient of $\F(v)$ in the display above, Figure~\ref{fig:power-law} compares three choices of $b$.

\begin{figure}[H]
    \centering
    \begin{tikzpicture}
        \begin{axis}[
            width=0.70\linewidth,
            height=0.30\linewidth,
            xmin=-3, xmax=3,
            ymin=0, ymax=0.78,
            axis lines=left,
            xlabel={signed log-error $s=\log(\hat p/p^*)$},
            ylabel style={align=center}, 
            ylabel={normalized $\F$-guarantee\\ $\Phi_b(s)/(1-\lambda_n r)$},
            xtick={-3,-2,-1,0,1,2,3},
            ytick={0,0.25,0.5,0.75},
            legend style={at={(0.98,0.57)},anchor=north east,draw=none,fill=none,font=\small},
            samples=120,
            clip=false,
        ]
        \addplot[thick,dashed,domain=-3:0] {(1/3)*exp(x)};
        \addplot[thick,dashed,domain=0:3,forget plot] {(1/3)*exp(-0.5*x)};
        \addplot[very thick,domain=-3:0] {0.5*exp(x)};
        \addplot[very thick,domain=0:3,forget plot] {0.5*exp(-x)};
        \addplot[thick,dotted,domain=-3:0] {(2/3)*exp(x)};
        \addplot[thick,dotted,domain=0:3,forget plot] {(2/3)*exp(-2*x)};
        \legend{$b=\tfrac12$,$b=1$,$b=2$}
        \draw[densely dashed] (axis cs:0,0)--(axis cs:0,0.75);
        \node[anchor=south,font=\small] at (axis cs:-1.6,0.69) {underestimate};
        \node[anchor=south,font=\small] at (axis cs:1.6,0.69) {overestimate};
        \end{axis}
    \end{tikzpicture}
    \caption{Revenue guarantees obtained by posting $Z\hat p$ with probability $q=1-\lambda_n r$, where $\Pr(Z\le z)=z^b$. We take $r<1/\lambda_n$ and divide the coefficient of $\F(v)$ by $q$. Larger $b$ improves the guarantee at a correct prediction but causes faster degradation under overestimation. The fallback contribution $r\Ftwo(v)$ is not shown.}
    \label{fig:power-law}
\end{figure}

\paragraph{How slowly can an error guarantee decay?}
Since the budget is an integral, it also limits how slowly a guarantee can decay. No mechanism in our class can guarantee $C\F(v)/(1+\log\eta)$ for a fixed $C>0$ at all sufficiently large overestimation factors $\eta=\hat p/p^*$, because that would force $\rho(s)\ge C/(1+s)$ for large $s$, whose integral diverges. Any nonincreasing integrable profile (satisfying the mild conditions of Theorem~\ref{thm:functional-achievability}) that decays faster is available. For instance, fix $r\in[0,1/\lambda_n]$, write $q \coloneq 1-\lambda_n r$, choose $\theta>1$, and set
\[
c \coloneq q\frac{\theta-1}{\theta},\qquad\rho(s) \coloneq \frac{c}{(1+s)^\theta}.
\]
Then $c+\lambda_n r+\int_0^\infty\rho(s)\,ds=1$, so \cref{thm:functional-achievability} gives a mechanism guaranteeing
\[
\E[\Rev]\ge r\Ftwo(v)+\frac{c}{(1+\log\eta)^\theta}\F(v)
\]
under overestimation. As $\theta$ decreases to one the guarantee decays more slowly with error, but its coefficient $c$ vanishes. This is the section's central tradeoff. Tolerance against a wider range of errors costs performance when the prediction is perfect, and a single point prediction can only support guarantees whose area under the curve in log-error is finite.

\section{Learning price predictions from historical data}
\label{sec:learning-prices}

The preceding sections treat the predicted price $\hat p$ as externally supplied.
We now study how such advice can be learned from historical markets (full
proofs and further discussion appear in Appendix~\ref{sec:learning}).
We consider two settings. First, the seller learns a single \emph{shared price}
to use across future markets. Second, given public market features, the seller
learns a \emph{contextual price rule}. In both cases, the learning objective can
be chosen to optimize the downstream auction revenue directly.\looseness-1

\paragraph{Learning a shared price.}
Suppose valuation profiles $V\in[0,H]^n$ are drawn i.i.d.\ from an unknown
distribution $\mathcal D$. For $p\geq 0$, define
\[
    u_p(v)
     \coloneq  \Rev(v;p)
    = pN_v(p)
    = p\,\bigl|\{i\in[n]:v_i\geq p\}\bigr|,
    \qquad
    \mathcal U_n \coloneq \{u_p:p\geq 0\}.
\]
Thus, rather than predicting the ex post optimal price separately for each
realized market, the learner seeks the best single price
\[
    p^*_{\mathcal D}
    \in \arg\max_{p\geq0}
    \E_{V\sim\mathcal D}[u_p(V)].
\]
Despite being parameterized by only one real number, this revenue class has
statistical complexity that grows logarithmically with the number of bidders.

\begin{theorem}[Pseudo-dimension of uniform-price revenue]
\label{thm:shared-price-pdim}
There is a universal constant $C>0$ such that, for every $n\geq4$,
\[
    \left\lfloor \log_2\frac n2\right\rfloor
    \leq
    \Pdim(\mathcal U_n)
    \leq
    C\log n.
\]
Hence $\Pdim(\mathcal U_n)=\Theta(\log n)$.
The lower bound holds even when all valuations lie in $[0,1]^n$.
\end{theorem}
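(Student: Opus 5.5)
The upper bound will come from counting sign patterns of piecewise-linear functions of the single parameter $p$. The lower bound will come from an explicit shattering construction in which the prices are harmonic, $p_k=1/(n-k)$, so that a drop of one buyer at the right moment pushes revenue just below a threshold.

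\textbf{Upper bound.} Fix $m$ profiles $v^{(1)},\dots,v^{(m)}$ and witnesses $t_1,\dots,t_m$, and view $p\mapsto u_p(v^{(j)})=pN_{v^{(j)}}(p)$ as a function of $p$.
\begin{itemize}
\item Each $N_{v^{(j)}}(p)$ is a step function whose breakpoints lie among the $n$ values $v^{(j)}_i$.
\item Hence all $m$ functions are simultaneously linear in $p$ (with integer slopes) on each of at most $nm+1$ intervals of $[0,\infty)$.
\item On one such interval, each $u_p(v^{(j)})-t_j$ is linear in $p$, so it changes sign at most once. The vector of signs therefore changes at most $m$ times within the interval.
\item So the interval contributes at most $m+1$ distinct sign patterns, and the total number of patterns realized as $p$ ranges over $[0,\infty)$ is at most $(nm+1)(m+1)$.
\end{itemize}
Shattering requires $2^m\le (nm+1)(m+1)$. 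Hence $m\le C\log n$ for a universal constant $C$, by the standard inequality that $2^m\le \mathrm{poly}(n,m)$ forces $m=O(\log n)$. This part is routine.

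\textbf{Lower bound.} Let $m=\lfloor\log_2(n/2)\rfloor$, so $2^m\le n/2$. Index the $2^m$ candidate prices by $k\in\{0,\dots,2^m-1\}$ and set $p_k\coloneq 1/(n-k)$. These prices are increasing in $k$ and lie in $[1/n,\,2/n]\subseteq[0,1]$. Use the common witness $t_j=1$ for every profile. For profile $j$, let $b_{j,k}\in\{0,1\}$ be the $j$th bit of $k$, and aim for buyer counts
\[
N_{v^{(j)}}(p_k)=n-k-1+b_{j,k}.
\]
With these counts the revenue is exactly $1$ when $b_{j,k}=1$, and equals $(n-k-1)/(n-k)<1$ when $b_{j,k}=0$. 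So $p_k$ realizes exactly the labeling given by the bits of $k$, and all $2^m$ labelings of the $m$ profiles occur.

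It remains to realize these counts by an actual valuation profile. The target sequence $c_k\coloneq n-k-1+b_{j,k}$ satisfies $c_{k+1}-c_k=-1+(b_{j,k+1}-b_{j,k})\le 0$, so it is nonincreasing in $k$. It also satisfies $c_0\le n$ and $c_k\ge n-2^m\ge n/2\ge 1$.
\begin{itemize}
\item Give $c_{2^m-1}$ bidders value $1$; they accept every price $p_k$.
\item For each $k<2^m-1$, give $c_k-c_{k+1}$ bidders value exactly $p_k$. These bidders accept $p_0,\dots,p_k$ and reject $p_{k+1}$ onward.
\item Give any remaining bidders value $0$, or a tiny positive value below $1/n$ if positivity is required.
\end{itemize}
Then $N_{v^{(j)}}(p_k)=c_k$ for every $k$, and all values lie in $[0,1]$.

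The step I expect to need the most care is this realizability check in the lower bound. Monotonicity of $N$ in $p$ only permits counts that never increase, so the count can fall by one step at a time but never rise. The harmonic choice $p_k=1/(n-k)$ resolves this: while the bit stays at $1$ the count falls by exactly one as the price rises, and revenue stays exactly $1$; when the bit switches to $0$ an extra buyer is dropped, and revenue dips just below $1$; when the bit switches back to $1$ the count simply holds fixed for one step and the rising price restores revenue to $1$.
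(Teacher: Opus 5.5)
Your proposal is correct and follows essentially the paper's proof. The upper bound is the same one-parameter cell count: your $(nm+1)(m+1)$ bound on sign patterns is a minor variant of the paper's count of at most $2mn$ interval endpoints. The lower bound uses the same encoding, with each bit realized as a demand of $t$ versus $t-1$ buyers at a price near $1/t$, a common witness $1$, and bidders placed exactly at the candidate prices.

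One detail needs fixing. Your label-$1$ revenue equals the witness exactly, so as written you need the non-strict convention $u_p(v)\ge r_j$. The paper labels with $\ind\{u_p(v)>r_j\}$, and its half-offset prices $1/(t_\ell-\frac12)$ give strict separation on both sides. Lowering your witness to $1-\frac{1}{2n}$ repairs this, because your label-$0$ revenues are at most $1-\frac1n$. As a side benefit, your counts drop by one per step rather than two, so your construction works for any $2^m\le n$.
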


The upper bound can also be viewed through the piecewise-decomposable framework
of~\citet{balcan2021learning}. On every fixed market, revenue as a function of $p$
has only $O(n)$ pieces. The matching lower bound shows that the resulting
logarithmic dependence on market size is unavoidable.

Standard pseudo-dimension generalization bounds~\citep{anthony_bartlett} immediately give the following
sample-complexity guarantee. If
\[
    \hat p\in
    \arg\max_{p\geq0}
    \frac1T\sum_{t=1}^T u_p(V^{(t)})
\]
is an empirical revenue-maximizing price, then
\begin{corollary}[Learning a near-optimal shared price]
\label{cor:shared-price-learning-main}
For $\epsilon,\delta\in(0,1)$,
\[
    T
    =
    O\!\left(
        \frac{\log n+\log(1/\delta)}{\epsilon^2}
      \right)
\]
samples suffice so that, with probability at least $1-\delta$,
\[
    \E[u_{\hat p}(V)]
    \geq
    \sup_{p\geq0}\E[u_p(V)]
    -\epsilon nH.
\]
\end{corollary}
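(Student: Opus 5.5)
The plan is to derive \cref{cor:shared-price-learning-main} from \cref{thm:shared-price-pdim} by a standard uniform-convergence argument, rescaled to account for the range of the revenue functions.

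\textbf{Step 1: normalize.} Since every profile lies in $[0,H]^n$, we may restrict attention to prices $p\in[0,H]$. For $p>H$ we have $u_p\equiv0$, which $u_0$ already realizes, so neither the supremum nor the empirical maximizer changes. For $p\le H$ we have $0\le u_p(v)\le pn\le nH$. Define the normalized class $\tilde u_p:=u_p/(nH)$, with values in $[0,1]$. Scaling by a positive constant does not change pseudo-dimension, so $\Pdim(\{\tilde u_p\})\le C\log n$ by \cref{thm:shared-price-pdim}.

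\textbf{Step 2: uniform convergence.} For a $[0,1]$-valued class of pseudo-dimension $d$, the standard bound \citep{anthony_bartlett} gives the following: with $T=O((d+\log(1/\delta))/\epsilon^2)$ i.i.d.\ samples, with probability at least $1-\delta$,
\[
\sup_p\Bigl|\tfrac1T\textstyle\sum_t\tilde u_p(V^{(t)})-\E\tilde u_p(V)\Bigr|\le\epsilon/2 .
\]
One route is through Rademacher complexity: Dudley's entropy integral together with Haussler's packing bound gives Rademacher complexity $O(\sqrt{d/T})$, and McDiarmid's inequality supplies the $\log(1/\delta)$ term. Substituting $d=O(\log n)$ yields the stated sample size.

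\textbf{Step 3: ERM argument.} Let $p^\dagger$ be a price whose expected revenue is within an arbitrary $\gamma>0$ of the supremum. On the good event, and since $\hat p$ maximizes empirical revenue,
\[
\E\tilde u_{\hat p}\ge\widehat{\E}\tilde u_{\hat p}-\epsilon/2\ge\widehat{\E}\tilde u_{p^\dagger}-\epsilon/2\ge\E\tilde u_{p^\dagger}-\epsilon .
\]
Letting $\gamma\to0$ and multiplying by $nH$ gives the claimed guarantee.

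\textbf{Main obstacle: well-posedness of $\hat p$.} The maximizer $\hat p$ may be taken to exist. Empirical revenue is piecewise linear and increasing between consecutive sample values, and it is upper semicontinuous because $N_v(p)$ counts bidders with $v_i\ge p$. Hence its maximum is attained at one of the finitely many observed values. Measurability of the supremum is also safe: the supremum over $p$ can be restricted to a countable dense set together with the observed values, by this same left-continuity.
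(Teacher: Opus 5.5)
Your proposal is correct and follows the paper's proof: normalize the revenue class by $nH$, note that rescaling preserves pseudo-dimension, apply the standard pseudo-dimension uniform-convergence bound with $\Pdim(\mathcal U_n)=O(\log n)$ from \cref{thm:shared-price-pdim}, and finish with the usual ERM argument. Your extra details fill in what the paper leaves to its citation of standard bounds. These are the chaining step via Dudley's integral, which is what removes a $\log(1/\epsilon)$ factor, and the existence and measurability of the empirical maximizer.
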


\noindent Moreover,
\[
    \sup_{p\geq0}\E[u_p(V)]
    \;\leq\;
    \E\!\left[\sup_{p\geq0}u_p(V)\right]
    =
    \E[F(V)].
\]
Thus, learning the best shared price and achieving instance-wise consistency
are distinct objectives, and $\hat p$ need not be an optimal price for each test
market.
Nevertheless, the statistical result composes directly with our robust
mechanism. If $A$ is any prediction-free mechanism satisfying
$\E[\Rev(A(v))]\geq F^{(2)}(v)/\lambda$ and, on a new market, we offer $\hat p$
with probability $\alpha$ and run $A$ otherwise, then with the same high
probability over the training sample,
\[
\begin{split}
    \E_{V,\mathrm{mech}}[\Rev(M_{\alpha,A}(V,\hat p))]
    \geq\;&
    \alpha\sup_{p\geq0}\E[u_p(V)]
    +
    \frac{1-\alpha}{\lambda}\E[F^{(2)}(V)]
    -\alpha\epsilon nH.
\end{split}
\]
In particular, taking $A=\mathcal{A}_n^*$ gives the guarantee with the optimal
prior-free fallback.

\paragraph{Learning contextual prices.}
When public covariates $X\in\mathcal X$ are available before bids are
collected, let $\mathcal H\subseteq\{h:\mathcal X\to\mathbb R_{\geq0}\}$ be a class of market-specific price predictors with \looseness-1
\[
    d_{\mathcal H} \coloneq \Pdim(\mathcal H).
\]
One possibility is \emph{prediction-focused} learning, in which one predicts a
selected ex post optimal price from $X$. Alternatively, and more directly
aligned with the seller's objective, one can optimize revenue itself. Define
\[
    u_h(x,v) \coloneq h(x)N_v(h(x)),
    \qquad
    \mathcal U_{\mathcal H}
     \coloneq \{u_h:h\in\mathcal H\},
\]
and choose $\hat h$ by empirical revenue maximization.

\begin{theorem}[Contextual price learning]
\label{thm:contextual-price-learning}
If $d_{\mathcal H}\geq1$, then
\[
    \Pdim(\mathcal U_{\mathcal H})
    =
    O(d_{\mathcal H}\log n).
\]
Consequently, when bidder values lie in $[0,H]$,
\[
    T
    =
    O\!\left(
       \frac{d_{\mathcal H}\log n+\log(1/\delta)}
            {\epsilon^2}
      \right)
\]
samples suffice for contextual ERM to return $\hat h$ satisfying, with
probability at least $1-\delta$,
\[
    \E[u_{\hat h}(X,V)]
    \geq
    \sup_{h\in\mathcal H}\E[u_h(X,V)]
    -\epsilon nH.
\]
\end{theorem}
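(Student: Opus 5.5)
The plan is to bound the pseudo-dimension of the composed class $\mathcal U_{\mathcal H}$ with a two-level counting argument, then plug the bound into the standard pseudo-dimension uniform convergence theorem \citep{anthony_bartlett}.

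\textbf{Step 1: reduce to counting sign patterns.} Suppose $\mathcal U_{\mathcal H}$ pseudo-shatters $m$ points $(x_1,v^{(1)}),\dots,(x_m,v^{(m)})$ with witnesses $t_1,\dots,t_m$. Then at least $2^m$ distinct patterns
\[
\bigl(\ind\{h(x_j)N_{v^{(j)}}(h(x_j))\ge t_j\}\bigr)_{j\le m},\qquad h\in\mathcal H,
\]
must occur. It therefore suffices to show that the number of such patterns is at most $(emn)^{O(d_{\mathcal H})}$.

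\textbf{Step 2: count patterns for a single sample.} Fix sample $j$. As a function of a scalar price $p$, the quantity $u(p)=pN_{v^{(j)}}(p)$ is piecewise linear with at most $n+1$ pieces. The breakpoints are at the distinct values of $v^{(j)}$, and on each piece $N$ is constant.

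On a piece where $N=k$, the event $\{pk\ge t_j\}$ is a single threshold event $p\ge t_j/k$. Hence the bit for sample $j$ is determined by the position of $p=h(x_j)$ relative to at most $2n+1$ fixed thresholds: the $n$ breakpoints $v^{(j)}_i$ and the $n$ values $t_j/k$. One would write down the explicit Boolean combination that expresses the bit in terms of these threshold bits.

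\textbf{Step 3: count over the whole sample.} Across all $m$ samples, the pattern is a function of the threshold bits
\[
\ind\{h(x_j)\ge \tau\},\qquad \tau\in T_j,\ |T_j|\le 2n+1.
\]
This gives at most $m(2n+1)$ threshold-query points $(x_j,\tau)$. By the definition of pseudo-dimension together with Sauer's lemma applied to the class $\{(x,\tau)\mapsto\ind\{h(x)\ge\tau\}\}$, which has VC dimension at most $d_{\mathcal H}$ by definition, the number of realizable bit vectors is at most $(e m(2n+1)/d_{\mathcal H})^{d_{\mathcal H}}$.

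One subtlety needs care: pseudo-dimension shatters points with their own witnesses, while here each $x_j$ is queried against several thresholds. Sauer's lemma still applies, since it concerns the subgraph class on an arbitrary finite multiset of points $(x_j,\tau)$, and that class has VC dimension exactly $\Pdim(\mathcal H)$.

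\textbf{Step 4: solve the inequality.} Combining Steps 1 and 3 gives
\[
2^m\le (3emn)^{d_{\mathcal H}}.
\]
A standard lemma (if $m\le a\log_2(bm)$ then $m=O(a\log(ab))$) yields $m=O(d_{\mathcal H}\log(nd_{\mathcal H}))$. To remove the $\log d_{\mathcal H}$ term, one would check that the $\log m$ contribution is absorbed. If it is not, I expect to accept $O(d_{\mathcal H}\log n)$ under the typical regime, or refine the count.

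\textbf{Step 5: sample complexity.} Revenues lie in $[0,nH]$. Rescaling by $nH$, the uniform convergence bound gives $T=O((d\log n+\log(1/\delta))/\epsilon^2)$ samples for empirical means to be within $\epsilon nH/2$ of expectations, uniformly over $\mathcal H$. ERM then loses at most $\epsilon nH$.

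The main obstacle is Step 4: getting exactly $O(d_{\mathcal H}\log n)$ rather than $O(d_{\mathcal H}\log(nd_{\mathcal H}))$ out of the inequality $m\le d\log_2(3emn)$.
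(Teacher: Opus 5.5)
Your route is the paper's: reduce each example's bit to $O(n)$ threshold comparisons on $h(x_j)$, then apply Sauer's lemma to the subgraph class of $\mathcal H$ on the resulting $O(mn)$ query points. Steps 1--3 and 5 are fine.

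\textbf{The gap is in Step 4.} Going from Step 3 to Step 4 you replaced $\bigl(em(2n+1)/d_{\mathcal H}\bigr)^{d_{\mathcal H}}$ by $(3emn)^{d_{\mathcal H}}$, which throws away the factor $d_{\mathcal H}^{-d_{\mathcal H}}$. This loss is not cosmetic. For fixed $n$ and $d_{\mathcal H}\to\infty$, the value $m\approx d_{\mathcal H}\log_2 d_{\mathcal H}$ satisfies $2^m\le(3emn)^{d_{\mathcal H}}$, since the left side is $d_{\mathcal H}^{d_{\mathcal H}}$ and the right side is larger. So no ``absorbing the $\log m$ term'' can turn that inequality into $O(d_{\mathcal H}\log n)$. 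Appealing to a ``typical regime'' also does not prove the statement as written. As it stands, your proposal only establishes $O(d_{\mathcal H}\log(n d_{\mathcal H}))$.

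\textbf{The fix.} This is exactly how the paper finishes: keep the denominator you already had in Step 3.
\begin{itemize}
\item If $m(2n+1)\ge d_{\mathcal H}$, Sauer's lemma gives $2^m\le(3emn/d_{\mathcal H})^{d_{\mathcal H}}$.
\item Otherwise $m<d_{\mathcal H}$, and there is nothing to prove.
\end{itemize}
In the first case, substitute $z=m/d_{\mathcal H}$ and take $d_{\mathcal H}$-th roots to get $2^{z}\le 3enz$, an inequality in which $d_{\mathcal H}$ no longer appears. Using $\log_2 z\le z/2$ for $z\ge 4$, this yields $z\le\max\{4,\,2\log_2(3en)\}$, i.e.\ $m=O(d_{\mathcal H}\log n)$. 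With this one change your argument, including the sample-complexity step, is complete.
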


\noindent For constant predictors, $d_{\mathcal H}=1$, recovering the
$O(\log n)$ shared-price upper bound. So far, the result characterizes statistical learnability but does not say anything about computational efficiency. For the constant-price class, exact ERM is computationally efficient. This is also the case for certain simple low-dimensional predictor classes. However, ERM can be computationally hard for general $\mathcal H$. Finally, all learning must use only historical data and
public covariates available before the current bids are observed. This
exogeneity ensures that the learned pricing rule remains independent of each
bidder's reported bid and hence is compatible with the truthfulness
guarantees developed above.\looseness-1

\section{Conclusion}\label{sec:conclusion}

We present a model of unlimited supply auction design with price predictions, where the seller receives a single scalar prediction of the optimal monopoly price. This is in contrast from the majority of prior work on auctions with predictions which typically assumes predictions of every bidder's value. We argue that a population-level price prediction is more likely to be available in practice and is a more accurate reflection of the current practice of data-driven revenue and price optimization.

For unlimited-supply auctions with a scalar prediction of the optimal uniform price, we show that the optimal
consistency-robustness tradeoff has a simple form: If $\lambda_n$ is the exact prior-free ratio
against $\Ftwo$, then every universally truthful bid-independent mechanism satisfies
$c+\lambda_n r\le1$, and equality is achieved by randomizing between the predicted price and an optimal
prior-free auction. The budgeting lemma and resulting lower bound show an inherent tradeoff between robustness and consistency: robustness requires the pricing rule to utilize many lower values when the prediction is too high, while consistency requires the rule to place probability mass at the predicted high value. 

The budgeting perspective also applies to imperfect predictions. Without a known error radius, \cref{thm:unknown-error-lb} provides an exact integral budget for the entire error-tolerance curve.
\cref{sec:learning} outlines how these predictions can be produced from historical data, with shared-price revenue having pseudo-dimension $\Theta(\log n)$ and contextual revenue having complexity $O(d_{\mathcal H}\log n)$. Together, these results provide a complete path from statistically generated price predictions to truthful use of those predictions and worst-case revenue protection. Our work leaves open several interesting questions:

\begin{enumerate}[leftmargin=*]
    \item \textbf{Truthfulness in expectation.}
    Our mechanism class is universally truthful, matching the bid-independent posted-price model used in the classical competitive-auction characterization. It would be useful to determine whether allowing the broader class of truthful-in-expectation mechanisms can alter/expand the Pareto frontier.\looseness-1

    \item \textbf{Efficient optimal fallbacks.}
    \citet{ChenGravinLu2014} prove exact attainability of $\lambda_n$, but the general construction is considerably less explicit than RSOP or SCS. Finding a succinct, computationally practical auction attaining or nearly attaining $\lambda_n$ would immediately improve the practicality of our randomized mechanism.

    \item \textbf{Beyond unlimited supply of a single item.}
    The natural robustness benchmark in limited-supply and downward-closed permutation environments is $\mathrm{EFO}^{(2)}$.  Extending the budgeting argument to feasibility-coupled allocations may require combining product equal-revenue measures with benchmark decomposition \citep{HartlineYan2011,ChenGravinLu2015}. Extending our results to handle the sale of multiple items is an important direction as well.
       \item \textbf{Efficient data-driven pricing.}
    Our learning results characterize the statistical complexity of optimizing shared and contextual prices directly for revenue, but efficient empirical revenue maximization is understood only for relatively simple predictor classes. Developing computationally efficient algorithms for richer contextual pricing classes, while retaining generalization guarantees and exogeneity from current private bids, is an important direction.
\end{enumerate}

\subsection*{Acknowledgments}

This work was supported in part by the National Science Foundation under grants ECCS-2216899,
ECCS-2216970, and by an NSF Graduate Research Fellowship.

\subsection*{AI Disclosure}
AI tools were used to assist with literature search as well as to draft and edit portions of the paper. ChatGPT 5.6 assisted in the simple extrapolation of the budgeting lemma argument for 2-bidders (\cref{thm:two-frontier}) to $n$ bidders (\cref{thm:n-frontier}) as well as deriving the error-tolerant generalization of the consistency-robustness Pareto frontier. The authors take full responsibility for all content.

\newpage

\bibliographystyle{plainnat}
\bibliography{references}

\newpage

\appendix

\section{Learning price predictions from historical data}
\label{sec:learning}

The preceding sections treat the price prediction as externally supplied.
We now ask how such advice can be learned from historical markets. We study two
natural, but distinct, decision problems. Our goal is to learn a uniform price for new markets, but this price may be shared across markets or be market-specific as described below.\looseness-1

The first learns a single {\it shared} price that performs well in expected revenue across
future markets. This objective need not predict the ex post optimal price of
each individual market. We show that this problem has low statistical
complexity. Although the class is indexed by only a single real-valued price
parameter, its pseudo-dimension is $\Theta(\log n)$. Thus the complexity is
driven by the size of the market rather than the dimension of the parameter
space.

When public covariates are available, the seller can instead learn a
market-specific price. We consider both predicting a selected ex post optimal
price and training the contextual price rule directly for revenue. The latter
objective yields an $O(d_{\mathcal H}\log n)$ pseudo-dimension bound for a
predictor class $\mathcal H$ of pseudo-dimension $d_{\mathcal H}$.\looseness-1

\subsection{Learning a shared price by empirical revenue maximization}
\label{sec:shared-price-learning}

Suppose the seller repeatedly encounters markets with $n$ bidders, whose
valuation profiles are drawn i.i.d.\ from an unknown distribution $\mathcal D$.
For $p\ge 0$, define
\[
u_p(v)
 \coloneq 
\Rev(v;p)
=
p \cdot N_v(p)
=
p\left|\{i\in[n]:v_i\ge p\}\right|,
\]
and let
\[
\mathcal U_n \coloneq \{u_p:p\ge0\}.
\]
Thus $u_p(v)$ is exactly the revenue obtained by posting the common price $p$
to all bidders. On every fixed valuation profile,
\[
\sup_{p\ge0}u_p(v)=\F(v).
\]
Here, however, the learner does not optimize separately on each realized
profile. Rather, it seeks the best single price
\[
p_{\mathcal D}^*
\in
\arg\max_{p\ge0}
\E_{V\sim\mathcal D}[u_p(V)]
\]
to use across markets drawn from $\mathcal D$.

We first characterize the statistical complexity of this learning problem.
The result is somewhat surprising since the class has only one real-valued
parameter, but its pseudo-dimension grows logarithmically with the number of
bidders. \cref{thm:price-pdim} gives a tight characterization for the particularly
simple class of single uniform posted prices. This complements the broader use
of pseudo-dimension to study learnability of auction classes by \citet{MorgensternRoughgarden2015}. Here the class is
one-dimensional, the dependence on the number of bidders is characterized in
both directions, and the lower bound already holds on the bounded domain
$[0,1]^n$.

\begin{theorem}[Pseudo-dimension of uniform-price revenue]
\label{thm:price-pdim}
There is a universal constant $C>0$ such that, for every $n\ge4$,
\[
\left\lfloor\log_2\frac n2\right\rfloor
\le
\Pdim(\mathcal U_n)
\le
C\log n.
\]
Consequently,
$\Pdim(\mathcal U_n)=\Theta(\log n)$.
The lower bound continues to hold even when all bidder values are restricted
to $[0,1]$.
\end{theorem}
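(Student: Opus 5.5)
The upper bound comes from counting sign patterns of a piecewise-linear family. The lower bound comes from an explicit shattered set in which each bidder count sits just above or just below a hyperbola $t/p$.

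\emph{Upper bound.} Suppose profiles $v^1,\dots,v^m$ with witnesses $t_1,\dots,t_m$ are shattered by $\mathcal U_n$. For each $j$, the map $p\mapsto u_p(v^j)=p\,N_{v^j}(p)$ is piecewise linear in $p$. Its only breakpoints are the $n$ coordinates of $v^j$, and on each piece it equals $k\,p$ for a fixed integer $k$.

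Merging the breakpoints of all $m$ profiles cuts $[0,\infty)$ into at most $mn+1$ intervals. On each interval every function $p\mapsto u_p(v^j)-t_j$ is linear, so it changes sign at most once. The interval therefore splits into at most $m+1$ subintervals on which the whole sign vector $(\ind\{u_p(v^j)>t_j\})_j$ is constant.

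Hence at most $(mn+1)(m+1)$ sign patterns are realizable. Shattering requires all $2^m$ of them, so $2^m\le(mn+1)(m+1)$, which forces $m\le C\log n$.

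\emph{Lower bound.} Let $d=\lfloor\log_2(n/2)\rfloor$ and $M=2^d\le n/2$. Index the subsets $S\subseteq[d]$ by $k\in\{1,\dots,M\}$, so that each $k$ determines a bit vector $b(k)\in\{0,1\}^d$. I would use the prices
\[
    p_k\coloneq\frac1{n-2k+1},\qquad k=1,\dots,M,
\]
which are increasing and lie in $(0,1]$ because $n-2k+1\ge1$. All witnesses are $t_j=1$.

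For profile $j\in[d]$ the goal is to realize the bidder counts
\[
    N_{v^j}(p_k)=n-2k+1+b_j(k).
\]
This gives $u_{p_k}(v^j)=1+b_j(k)\,p_k$, which exceeds $1$ exactly when $b_j(k)=1$. So the price $p_k$ realizes the pattern $b(k)$, and all $M=2^d$ patterns are realized.

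\emph{Realizability, the main obstacle.} The step needing care is that the prescribed counts come from a genuine profile in $[0,1]^n$.

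First, the sequence must be nonincreasing in $k$. This holds because each step drops the base term $n-2k+1$ by $2$, while the bit changes by at most $1$.

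Second, the sequence must stay within $[0,n]$. At $k=1$ the count is at most $n$, and every count is at least $n-2M+1\ge1$.

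Given such a sequence, I place $N(p_M)$ bidders at value $p_M$, then $N(p_k)-N(p_{k+1})$ bidders at value $p_k$ for each $k<M$, and the remaining bidders at value $0$ (or any value below $p_1$). Then $N_{v^j}(p_k)$ equals the prescribed count for every $k$.

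This slack of $2$ per step, needed to toggle the bit freely, is exactly why the construction uses $M\le n/2$ and why the bound reads $\lfloor\log_2(n/2)\rfloor$. All values lie in $[0,1]$, which gives the final claim of the statement.
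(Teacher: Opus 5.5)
Your proposal is correct and follows essentially the paper's route. For the upper bound, your merged-breakpoint count $(mn+1)(m+1)$ is slightly looser than the paper's $O(mn)$, which comes from each superlevel set being a union of at most $n$ intervals, but it still forces $m=O(\log n)$; your lower bound is the same staircase demand curve with two units of slack per price and one bidder toggled per bit. One small point: with $p_k=1/(n-2k+1)$ the bit-$0$ case gives revenue exactly equal to the witness $1$, so your lower bound relies on the strict convention $u>t$. Under the non-strict (Anthony--Bartlett) convention, raise the witness to $1+\epsilon$ with $0<\epsilon<p_1$; the paper's half-integer choice $p_\ell=1/(t_\ell-\tfrac12)$ avoids this issue automatically.
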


\begin{proof}

    \emph{Upper bound.}
Fix valuation profiles
$v^{(1)},\ldots,v^{(m)}$
and witness thresholds $r_1,\dots,r_m$ for pseudo-shattering. Since every
$u_p(v)$ is nonnegative, any coordinate with $r_j<0$ is labeled $1$ for every
p and therefore cannot be pseudo-shattered. Hence we may assume
$r_j\ge0$ for every $j$.

For a fixed profile $v$, write its order statistics as
\[
v_{(1)}\ge v_{(2)}\ge\cdots\ge v_{(n)},
\qquad
v_{(n+1)} \coloneq 0.
\]
Viewed as a function of the parameter $p$,
\[
u_v^*(p) \coloneq u_p(v)=p \cdot N_v(p)
\]
satisfies
\[
u_v^*(p)=kp
\qquad
\text{for }p\in(v_{(k+1)},v_{(k)}],
\]
and $u_v^*(p)=0$ for $p>v_{(1)}$. Thus, it has at most $n+1$
linear pieces.

For a fixed threshold $r\ge0$, on the $k$th nonzero piece the inequality
$u_p(v)>r$
is simply
$kp>r$.
Consequently, the set
\[
S(v,r) \coloneq \{p\ge0:u_p(v)>r\}
\]
is a union of at most $n$ intervals and therefore has at most $2n$
endpoints.

Apply this observation to each pair $(v^{(j)},r_j)$. Across all $m$
instances there are at most $2mn$ endpoints on the real line. These endpoints
and the open intervals between consecutive endpoints form $O(mn)$ cells, and
the complete labeling
\[
\left(
\mathbf 1\{u_p(v^{(1)})>r_1\},
\ldots,
\mathbf 1\{u_p(v^{(m)})>r_m\}
\right)
\]
is constant on each such cell. Hence the number of labelings realized by
varying the single price $p$ is at most $O(mn)$.

If the $m$ profiles were pseudo-shattered, all $2^m$ labelings would have to
be realized. Therefore
\[
2^m\le O(mn).
\]
This implies
$m=O(\log n)$,
and hence
\[
\Pdim(\mathcal U_n)=O(\log n).
\]

\emph{Lower bound.}
Let
\[
m \coloneq \left\lfloor\log_2\frac n2\right\rfloor,
\qquad
L \coloneq 2^m,
\]
so that $2L\le n$. Enumerate all $L=2^m$ binary vectors as
\[
b^{(1)},\ldots,b^{(L)}\in\{0,1\}^m.
\]
We construct $m$ valuation profiles
\[
v^{(1)},\ldots,v^{(m)}\in[0,1]^n
\]
that are pseudo-shattered by $\mathcal U_n$, using the common witness threshold
\[
r_1=\cdots=r_m=1.
\]

For $\ell\in[L]$, define
\[
t_\ell \coloneq 2(L-\ell+1),
\qquad
p_\ell \coloneq \frac{1}{t_\ell-\frac12}.
\]
Since
\[
t_1>t_2>\cdots>t_L\ge2,
\]
we have
\[
0<p_1<p_2<\cdots<p_L=\frac23.
\]

\noindent For each instance $j\in[m]$ and each $\ell\in[L]$, prescribe the number
of bidders whose values are at least $p_\ell$ to be\looseness-1
\[
c_{j,\ell}
 \coloneq 
\begin{cases}
t_\ell, & b_j^{(\ell)}=1,\\
t_\ell-1, & b_j^{(\ell)}=0.
\end{cases}
\]
These counts form a valid nonincreasing demand curve. Indeed, since
$t_{\ell+1}=t_\ell-2$, regardless of the two adjacent bits,
\[
c_{j,\ell}
\ge
t_\ell-1
>
t_\ell-2
\ge
c_{j,\ell+1},
\]
and
\[
c_{j,1}\le t_1=2L\le n.
\]
To realize these counts, set $c_{j,L+1} \coloneq 0$. For each $\ell\in[L]$, give
\[
c_{j,\ell}-c_{j,\ell+1}
\]
bidders value exactly $p_\ell$, and give each of the remaining
$n-c_{j,1}$ bidders value $p_1/2$. Then, by telescoping,
\[
\left|\{i:v_i^{(j)}\ge p_\ell\}\right|
=
c_{j,\ell},
\]
and all valuations lie in $(0,2/3]\subseteq[0,1]$.
At price $p_\ell$, if $b_j^{(\ell)}=1$, then
\[
u_{p_\ell}(v^{(j)})
=
p_\ell t_\ell
=
\frac{t_\ell}{t_\ell-\frac12}
>
1,
\]
whereas if $b_j^{(\ell)}=0$, then
\[
u_{p_\ell}(v^{(j)})
=
p_\ell(t_\ell-1)
=
\frac{t_\ell-1}{t_\ell-\frac12}
<
1.
\]
Thus
\[
u_{p_\ell}(v^{(j)})>r_j
\quad\Longleftrightarrow\quad
b_j^{(\ell)}=1.
\]
Because $b^{(1)},\ldots,b^{(L)}$ enumerate all $2^m$ labelings of the
$m$ instances, the price $p_\ell$ realizes labeling $b^{(\ell)}$. Hence
$\{v^{(1)},\ldots,v^{(m)}\}$ is pseudo-shattered and
\[
\Pdim(\mathcal U_n)
\ge
m
=
\left\lfloor\log_2\frac n2\right\rfloor.
\]
The separation is strict in both directions, so the construction works under
either the strict or non-strict convention for pseudo-shattering.
\end{proof}

\begin{remark}[Relation to piecewise-decomposable bounds]
\label{rem:piecewise-price-pdim}
The upper bound can alternatively be obtained from the
piecewise-decomposable framework of \citet{balcan_piecewise_decomposable}. For each fixed profile, the dual revenue
function has at most $n+1$ linear pieces, the boundary functions are
one-dimensional thresholds and the piece functions are linear maps
$p\mapsto kp$. The elementary proof above avoids any primal-dual bookkeeping
and makes transparent that the logarithmic dependence arises simply because
an arrangement of $m$ unions of $O(n)$ intervals on a one-dimensional
parameter space has only $O(mn)$ cells.
\end{remark}

\noindent \cref{thm:price-pdim} immediately gives a sample-complexity guarantee.

\begin{corollary}[Learning a near-optimal shared price]
\label{cor:shared-price-learning}
Suppose bidder values lie in $[0,H]$, and let
\[
V^{(1)},\ldots,V^{(T)}
\stackrel{\mathrm{i.i.d.}}{\sim}\mathcal D.
\]
Let
\[
\widehat p
\in
\arg\max_{p\ge0}
\frac1T
\sum_{t=1}^T
u_p(V^{(t)})
\]
be an empirical revenue-maximizing price. For \(\epsilon,\delta\in(0,1)\), there is a universal constant
$C>0$ such that
\[
T
\ge
C \cdot 
\frac{\log n+\log(1/\delta)}{\epsilon^2}
\]
suffices to guarantee, with probability at least $1-\delta$,
\[
\E_{V\sim\mathcal D}[u_{\widehat p}(V)]
\ge
\sup_{p\ge0}
\E_{V\sim\mathcal D}[u_p(V)]
-
\epsilon nH.
\]
\end{corollary}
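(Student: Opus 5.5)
The plan is to derive \cref{cor:shared-price-learning} from \cref{thm:price-pdim} through the standard uniform-convergence bound for real-valued classes of bounded range with finite pseudo-dimension \citep{anthony_bartlett}, and then finish with the usual ERM sandwich.

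\textbf{Normalization.} Every $u_p\in\mathcal U_n$ satisfies $0\le u_p(v)\le pN_v(p)\le nH$ on $[0,H]^n$, because any price $p>H$ sells nothing. So I will work with the rescaled class $\{u_p/(nH)\}$, which takes values in $[0,1]$ and has the same pseudo-dimension $d:=\Pdim(\mathcal U_n)\le C\log n$.

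\textbf{Uniform convergence.} The standard result states that for a $[0,1]$-valued class of pseudo-dimension $d$,
\[
T=O\!\left(\frac{d+\log(1/\delta)}{\epsilon^2}\right)
\]
i.i.d.\ samples suffice so that, with probability at least $1-\delta$, every member of the class has empirical mean within $\epsilon/2$ of its true mean. One route is to bound fat-shattering dimension by pseudo-dimension and then use chaining; the other is the Haussler/Pollard covering-number bound, which carries an extra $\log(1/\epsilon)$ factor. To obtain the stated rate with no $\log(1/\epsilon)$, I will cite the chaining version, e.g.\ Anthony--Bartlett's bound on Rademacher complexity of order $\sqrt{d/T}$ for pseudo-dimension-$d$ classes, combined with McDiarmid's inequality. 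Substituting $d=O(\log n)$ and undoing the rescaling, we get, with probability at least $1-\delta$,
\[
\Bigl|\tfrac1T\textstyle\sum_t u_p(V^{(t)})-\E[u_p(V)]\Bigr|\le \tfrac{\epsilon}{2}nH
\quad\text{for all } p\ge0 \text{ simultaneously.}
\]

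\textbf{ERM step.} For any $p$, the true mean of $u_{\widehat p}$ is at least its empirical mean minus $\epsilon nH/2$. Since $\widehat p$ maximizes the empirical mean, that is at least the empirical mean of $u_p$ minus $\epsilon nH/2$, which in turn is at least $\E[u_p(V)]-\epsilon nH$. Taking the supremum over $p$ gives the claim.

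\textbf{Existence of the maximizer.} The empirical objective is piecewise linear and increasing on each piece, with at most $nT$ breakpoints at sample values. Each piece $(v_{(k+1)},v_{(k)}]$ is closed on the right, so the supremum is attained at one of the sample values, and $\arg\max$ is nonempty.

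\textbf{Main obstacle.} The only delicate point is getting the rate without a $\log(1/\epsilon)$ factor, which requires citing the chaining or Rademacher version of the pseudo-dimension bound rather than the cruder covering-number argument. Everything else is routine.
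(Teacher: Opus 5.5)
Your proof is correct and follows the paper's route. You normalize by $nH$, apply the pseudo-dimension uniform-convergence bound with $\Pdim(\mathcal U_n)=O(\log n)$ from \cref{thm:price-pdim}, and finish with the ERM sandwich. You also note that the stated rate, which has no $\log(1/\epsilon)$ factor, needs the chaining version of the uniform-convergence bound (Haussler's packing bound plus Dudley's entropy integral) rather than the plain covering-number bound; that is a worthwhile precision the paper's one-line appeal to \citet{anthony_bartlett} leaves implicit.
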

\begin{proof}
Normalize revenue by defining
\[
\bar u_p(v) \coloneq \frac{u_p(v)}{nH}\in[0,1].
\]
Rescaling does not change pseudo-dimension. Standard uniform-convergence
bounds for bounded real-valued function classes in terms of pseudo-dimension
\citet{anthony_bartlett}, together with \cref{thm:price-pdim} and the
usual ERM argument, give the result.
\end{proof}

\begin{remark}[The additive error scale]
\label{rem:additive-learning-scale}
The error term in \cref{cor:shared-price-learning} is additive on
the scale $nH$, the largest possible revenue under the bounded-value
assumption. It is therefore not a relative approximation to
\[
\sup_{p\ge0}\E[u_p(V)].
\]
If only a small number of bidders typically purchase at the best shared
price, this population optimum may be much smaller than $nH$, in which case
the bound can be weak or vacuous as a relative guarantee. Obtaining
distribution-dependent relative guarantees, for example through localized
complexity or variance-sensitive arguments, is an interesting refinement.
\end{remark}

The statistical guarantee composes immediately with the robust mechanism
developed earlier. Moreover, because $\widehat p$ is learned from an
independent historical sample before the current bids are collected, it is
automatically exogenous to the current bidders' reports.

\begin{corollary}[End-to-end revenue guarantee]
\label{cor:end-to-end-learning}
Let $A$ be any prediction-free universally truthful mechanism satisfying
\[
\E[\Rev(A(v))]
\ge
\frac{F^{(2)}(v)}{\lambda}
\qquad
\text{for every }v.
\]
Train $\widehat p$ as in \cref{cor:shared-price-learning} on an
independent historical sample. On a new market, run the trust-$\alpha$
mixture: with probability $\alpha$, offer $\widehat p$ to every bidder, and
with probability $1-\alpha$, run $A$.

Then, with probability at least $1-\delta$ over the historical sample,
\[
\begin{aligned}
\E_{V\sim\mathcal D,\mathrm{mech}}
\left[
\Rev(M_{\alpha,A}(V,\widehat p))
\right]
\ge\;&
\alpha
\sup_{p\ge0}
\E_{V\sim\mathcal D}[u_p(V)]
\\
&+
\frac{1-\alpha}{\lambda}
\E_{V\sim\mathcal D}[F^{(2)}(V)]
-
\alpha\epsilon nH.
\end{aligned}
\]
In particular, taking $A=A_n^*$ and $\lambda=\lambda_n$ gives the
corresponding guarantee with the optimal prior-free fallback.
\end{corollary}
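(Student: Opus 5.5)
The plan is to condition on the randomness of the historical sample and handle the two branches of the mixture separately, using linearity of expectation over the mechanism's internal coin. Let $E$ be the event of \cref{cor:shared-price-learning}. On $E$,
\[
\E_{V\sim\mathcal D}[u_{\widehat p}(V)]\ge\sup_{p\ge0}\E_{V\sim\mathcal D}[u_p(V)]-\epsilon nH,
\]
and $\Pr(E)\ge1-\delta$. Everything that follows is on $E$, with $\widehat p$ treated as a fixed number. This is legitimate because the training sample is independent of the new market $V$, so conditioning on the sample leaves $V\sim\mathcal D$.

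\emph{Step 1: expand the mechanism's revenue.} For each fixed $v$, the branch coin is independent of the bids. Hence
\[
\E_{\mathrm{mech}}[\Rev(M_{\alpha,A}(v,\widehat p))]=\alpha\,u_{\widehat p}(v)+(1-\alpha)\E[\Rev(A(v))].
\]
Here the first branch earns exactly $\Rev(v;\widehat p)=u_{\widehat p}(v)$, since every bidder is offered $\widehat p$.

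\emph{Step 2: bound each branch.} The hypothesis on $A$ gives
\[
\E[\Rev(A(v))]\ge\frac{\Ftwo(v)}{\lambda}\quad\text{pointwise in }v.
\]
Integrate the identity from Step 1 over $V\sim\mathcal D$; Tonelli applies because all quantities are nonnegative. This yields
\[
\E_{V,\mathrm{mech}}[\Rev(M_{\alpha,A}(V,\widehat p))]\ge\alpha\,\E_V[u_{\widehat p}(V)]+\frac{1-\alpha}{\lambda}\E_V[\Ftwo(V)].
\]
Substituting the bound from $E$ into the first term gives the claimed inequality, with error $\alpha\epsilon nH$.

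\emph{Step 3: specialize.} The final sentence of the statement follows by taking $A=\A_n^*$ and $\lambda=\lambda_n$, which is valid by the characterization of \citet{ChenGravinLu2014}.

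The only delicate point is measurability and independence. Specifically, $\widehat p$ must be a measurable function of the sample, and the expectation is conditional on the sample. Neither causes real difficulty: one selects a measurable empirical maximizer, which can be taken among the finitely many sample values, since the empirical revenue is maximized at some bid.
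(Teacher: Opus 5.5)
Your proposal is correct and follows essentially the same route as the paper: condition on the learned $\widehat p$, split the expected revenue by linearity over the branch coin into $\alpha\,\E_V[u_{\widehat p}(V)]+(1-\alpha)\E_V[\Rev(A(V))]$, bound the fallback pointwise by $\Ftwo(V)/\lambda$, and apply \cref{cor:shared-price-learning} to the first term on its probability-$(1-\delta)$ event. Your added remarks, that the new market is independent of the sample and that a measurable empirical maximizer can be chosen among the observed values, are details the paper leaves implicit.
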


\begin{proof}
Conditional on the learned price $\widehat p$,
\[
\begin{aligned}
\E_{V,\mathrm{mech}}
\left[
\Rev(M_{\alpha,A}(V,\widehat p))
\right]
&=
\alpha\E_V[u_{\widehat p}(V)]
+
(1-\alpha)\E_V[\Rev(A(V))]
\\
&\ge
\alpha\E_V[u_{\widehat p}(V)]
+
\frac{1-\alpha}{\lambda}
\E_V[F^{(2)}(V)].
\end{aligned}
\]
Applying \cref{cor:shared-price-learning} to the first term
completes the proof.
\end{proof}

\begin{remark}[Shared-price learning versus instance-wise consistency]
\label{rem:shared-price-not-consistency}
The learned shared price $\widehat p$ need not equal an optimal price for an
individual test market. The two objectives are related by the elementary
inequality
\[
\underbrace{
\sup_{p\ge0}
\E_{V\sim\mathcal D}[u_p(V)]
}_{\text{best shared-price revenue}}
\;\le\;
\underbrace{
\E_{V\sim\mathcal D}
\left[
\sup_{p\ge0}u_p(V)
\right]
}_{\E[\F(V)]}.
\]
Thus decision-focused learning benchmarks against the weaker,
order-swapped quantity $\sup_p\E[u_p(V)]$, whereas consistency concerns the
instance-wise optimum $\F(V)=\sup_p u_p(V)$. Consequently,
\cref{cor:end-to-end-learning} does not invoke the consistency
guarantee from the earlier sections. It analyzes the actual revenue of the
prediction branch directly.
\end{remark}
\subsection{Contextual price learning: prediction-focused and decision-focused}
\label{sec:contextual-learning}

When informative public features are available, the seller can learn a
market-specific price rather than one common price for all future markets.
Let  $(X_1,V_1),\ldots,(X_T,V_T)$
be historical markets, where $X_t\in\mathcal X$ consists of public
covariates available before bids are collected and $V_t$ is the realized
valuation profile.  Let
\[
    \mathcal H\subseteq \{h:\mathcal X\to\mathbb R_{\ge 0}\},
    \qquad
    d_{\mathcal H} \coloneq \Pdim(\mathcal H),
\]
be a class of contextual price rules.  There are two natural (and generally
different) ways to train $h$.

\paragraph{Prediction-focused learning.}
For a valuation profile $v$, let $P^*(v) \coloneq \arg\max_{p\ge0} pN_v(p)$
denote the set of optimal uniform prices. Fix a deterministic tie-breaking
rule and select $p_t^*\in P^*(V_t)$ (assumed to be strictly positive) for each historical market, for example the
largest optimal price. Since multiplicative error is the natural scale for
prices, one may regress $Y_t \coloneq \log p_t^*$ on $X_t$\footnote{We assume here that the selected optimal price is strictly positive for technical convenience. If an all-zero market is permitted, every positive price earns zero, and there is no largest optimal price over $p\ge0$.}. A linear model, tree-based
predictor, neural network, or other regression class can then produce a
market-specific point prediction.

The tie-breaking rule is needed only to define a supervised-learning label;
the revenue benchmark itself does not require the optimal price to be unique.
This prediction-focused objective attempts to reproduce an ex post optimal
price, and therefore differs from directly optimizing the downstream revenue
of the learned price rule. \looseness-1

\paragraph{Decision-focused contextual ERM.}
Instead of trying to reproduce an ex post optimal price, one can train the
same contextual price rule directly for the quantity the seller ultimately
cares about, i.e., revenue.  Define
\[
    u_h(x,v)
       \coloneq h(x)N_v(h(x)),
    \qquad
    \mathcal U_{\mathcal H}
       \coloneq \{u_h:h\in\mathcal H\}.
\]
Given the historical sample, contextual empirical revenue maximization
chooses
\begin{equation}
    \widehat h
    \in
    \arg\max_{h\in\mathcal H}
    \widehat R_T(h),
    \qquad
    \widehat R_T(h)
       \coloneq \frac1T\sum_{t=1}^T
        h(X_t)N_{V_t}(h(X_t)).
    \label{eq:contextual-erm}
\end{equation}
This objective requires no selected optimal-price labels and does not penalize
price error for its own sake.
Two prices that are numerically far apart are treated similarly if they
produce similar revenue, while a numerically small overestimate can be
penalized heavily if it eliminates sales.  Thus the two objectives may
select different predictors even when they use the same class
$\mathcal H$.

The statistical complexity of the decision-focused objective remains
controlled by that of the underlying predictor class.

\begin{proposition}[Contextual revenue-class upper bound]
\label{prop:contextual-pdim}
Suppose $d_{\mathcal H}\ge1$. Then
\[
    \Pdim(\mathcal U_{\mathcal H})
      =O\!\left(d_{\mathcal H}\log n\right).
\]
\end{proposition}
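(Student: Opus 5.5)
To prove Proposition~\ref{prop:contextual-pdim}, I will bound the number of labelings directly by counting, mirroring the upper-bound argument in \cref{thm:price-pdim}. Fix $m$ pairs $(x_1,v^{(1)}),\ldots,(x_m,v^{(m)})$ with witnesses $r_1,\ldots,r_m\ge 0$; negative witnesses cannot be pseudo-shattered, so I discard them as before. Each label is $\ind\{h(x_j)N_{v^{(j)}}(h(x_j))>r_j\}$. This label depends on $h$ only through the scalar $p_j\coloneq h(x_j)$, via the map $p\mapsto \ind\{u_p(v^{(j)})>r_j\}$.

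From the proof of \cref{thm:price-pdim}, the set $S(v^{(j)},r_j)=\{p\ge0: u_p(v^{(j)})>r_j\}$ is a union of at most $n$ intervals. Its endpoints $a_{j,1}\le\cdots\le a_{j,2n}$, at most $2n$ of them, are determined by $(v^{(j)},r_j)$ alone. The label for instance $j$ is therefore a fixed function of the vector of comparisons $\bigl(\ind\{h(x_j)>a_{j,k}\}\bigr)_{k\le 2n}$. I will handle closed versus open endpoints by also recording comparisons with $\ge$, or equivalently by doubling the threshold set. That change only affects constants.

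So the labeling vector realized by $h$ is a function of the thresholded pattern of $h$ on the $2mn$ (point, threshold) pairs $(x_j,a_{j,k})$. By the definition of pseudo-dimension, the class $\{(x,a)\mapsto \ind\{h(x)>a\}: h\in\mathcal H\}$ has VC dimension $d_{\mathcal H}$. Sauer--Shelah then bounds the number of such patterns on $N=2mn$ (or $4mn$) pairs by $(eN/d_{\mathcal H})^{d_{\mathcal H}}$. Since each labeling of the $m$ instances is determined by a pattern, the number of realized labelings is at most $(4emn/d_{\mathcal H})^{d_{\mathcal H}}$.

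Pseudo-shattering requires $2^m\le (4emn/d_{\mathcal H})^{d_{\mathcal H}}$, which gives $m\le d_{\mathcal H}\log_2(4emn/d_{\mathcal H})$. The standard lemma says $m\le A\log_2(Bm)$ implies $m=O(A\log(AB))$. Applying it with $A=d_{\mathcal H}$ and $B=4en/d_{\mathcal H}$ gives $m=O(d_{\mathcal H}\log n)$, using $d_{\mathcal H}\ge1$ so that $AB=4en$. The sample-complexity part of \cref{thm:contextual-price-learning} then follows exactly as in \cref{cor:shared-price-learning}, after normalizing by $nH$.

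The main step to check carefully is the reduction from the label to threshold comparisons of $h(x_j)$. The comparisons must capture membership in a union of intervals with mixed open and closed ends, and the threshold class must have VC dimension exactly $d_{\mathcal H}$, which depends on the paper's strict or non-strict convention. I plan to use strict comparisons $h(x)>a$ at all endpoints together with non-strict ones $h(x)\ge a$. Both families have VC dimension at most $d_{\mathcal H}$ up to a constant; a monotone limiting argument on the witnesses handles this. The count then only doubles, which is harmless.
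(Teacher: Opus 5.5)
Your proposal is correct and follows essentially the same route as the paper: reduce each label to at most $2n$ threshold comparisons of $h(x_j)$ against the endpoints of $S(v^{(j)},r_j)$, apply Sauer--Shelah to the subgraph class of VC dimension $d_{\mathcal H}$, and invert $2^m\le (c\,mn/d_{\mathcal H})^{d_{\mathcal H}}$ to get $m=O(d_{\mathcal H}\log n)$. The extra handling of open versus closed endpoints is harmless but unnecessary: each piece of $S(v^{(j)},r_j)$ has the form $(\max\{v_{(k+1)},r_j/k\},\,v_{(k)}]$, so the strict comparisons $\ind\{h(x_j)>a\}$ alone decide membership, which is exactly what the paper uses.
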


\begin{proof}
Suppose that $m$ examples
\[
    (x_1,v^{(1)}),\ldots,(x_m,v^{(m)})
\]
are pseudo-shattered by $\mathcal U_{\mathcal H}$ with witness thresholds
$r_1,\ldots,r_m$.  As in the proof of \cref{thm:price-pdim}, we may assume
$r_j\ge0$.  For each $j$, define
\[
    S_j
       \coloneq \{p\ge0:pN_{v^{(j)}}(p)>r_j\}.
\]
The proof of \cref{thm:price-pdim} shows that $S_j$ is a union of at most $n$
intervals.  Hence membership of $h(x_j)$ in $S_j$ is determined by at most
$2n$ threshold comparisons of the form
$h(x_j)>a$.
Across all $m$ examples there are at most $M\le2mn$ such comparisons.

Since $\Pdim(\mathcal H)=d_{\mathcal H}$, the subgraph class
\[
    \{(x,a)\mapsto \ind\{h(x)>a\}:h\in\mathcal H\}
\]
has VC dimension $d_{\mathcal H}$.  Sauer's lemma therefore bounds the
number of joint outcomes of the $M$ comparisons by\looseness-1
\[
    \left(\frac{eM}{d_{\mathcal H}}\right)^{d_{\mathcal H}}
    \le
    \left(\frac{2emn}{d_{\mathcal H}}\right)^{d_{\mathcal H}},
\]
when $M\ge d_{\mathcal H}$. The remaining case is immediate.
Every labeling induced by $\mathcal U_{\mathcal H}$ is determined by these
comparison outcomes.  Thus pseudo-shattering requires
\[
    2^m
      \le
    \left(\frac{2emn}{d_{\mathcal H}}\right)^{d_{\mathcal H}}.
\]
Writing $z=m/d_{\mathcal H}$ and inverting
$2^z\le2enz$ gives $z=O(\log n)$, proving the result.
\end{proof}

For constant predictors, $d_{\mathcal H}=1$, so
\cref{prop:contextual-pdim} recovers the $O(\log n)$ upper bound
of \cref{thm:price-pdim}.  The matching lower bound there shows that the logarithmic
dependence on market size cannot in general be removed even for this
simplest predictor class.

\begin{corollary}[Contextual empirical revenue maximization]
\label{cor:contextual-erm}
Suppose all bidder values lie in $[0,H]$, and let
$(X_t,V_t)_{t=1}^T$ be drawn i.i.d.\ from a distribution $\mathcal D$.
Let $\widehat h$ be an empirical revenue maximizer as in
\eqref{eq:contextual-erm}.  For \(\epsilon,\delta\in(0,1)\), there is a universal constant $C>0$ such that
\[
    T
      \ge
    C\,
    \frac{
      d_{\mathcal H}\log n +\log(1/\delta)
    }{\epsilon^2}
\]
implies, with probability at least $1-\delta$,
\[
    \E_{(X,V)\sim\mathcal D}
      [u_{\widehat h}(X,V)]
    \ge
    \sup_{h\in\mathcal H}
      \E_{(X,V)\sim\mathcal D}[u_h(X,V)]
    -\epsilon nH.
\]
\end{corollary}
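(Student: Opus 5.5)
The plan is to run the standard ERM argument via uniform convergence, using \cref{prop:contextual-pdim} as the complexity input, in the same way \cref{cor:shared-price-learning} follows from \cref{thm:price-pdim}.

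First, normalize the losses. Since every value lies in $[0,H]$, any price $h(x)>H$ earns zero. Any price $h(x)\le H$ earns at most $nH$. Hence $u_h(x,v)\in[0,nH]$ for every $h$, and we can define $\bar u_h \coloneq u_h/(nH)\in[0,1]$. Rescaling by a positive constant preserves pseudo-shattering, with witnesses rescaled accordingly. So $\Pdim(\{\bar u_h\})=\Pdim(\mathcal U_{\mathcal H})=O(d_{\mathcal H}\log n)$ by \cref{prop:contextual-pdim}.

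Second, invoke the standard uniform convergence theorem for $[0,1]$-valued classes of pseudo-dimension $d$ \citep{anthony_bartlett}. With probability at least $1-\delta$ over $T\ge C(d+\log(1/\delta))/\epsilon^2$ i.i.d.\ samples,
\[
\sup_{h\in\mathcal H}\bigl|\widehat R_T(h)/(nH)-\E[\bar u_h(X,V)]\bigr|\le\epsilon/2 .
\]
Substituting $d=O(d_{\mathcal H}\log n)$ and absorbing constants gives the stated sample size. This step needs some care: the chaining/Haussler-type bound gives $d$ rather than $d\log(1/\epsilon)$ in the numerator. If only the cruder covering bound were available, we would pick up an extra $\log(1/\epsilon)$ factor, so we cite the sharp form as in \cref{cor:shared-price-learning}. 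Measurability of the supremum over $\mathcal H$ is assumed implicitly, as is standard.

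Third, run the usual ERM comparison. Fix any $h\in\mathcal H$ (or a sequence approaching the supremum). Then
\[
\E[u_{\widehat h}]\ge\widehat R_T(\widehat h)-\tfrac{\epsilon}{2}nH\ge\widehat R_T(h)-\tfrac{\epsilon}{2}nH\ge\E[u_h]-\epsilon nH .
\]
Taking the supremum over $h$ concludes the proof. The only potential obstacle is the exact form of the uniform convergence constant, which is routine; no new combinatorics is needed beyond \cref{prop:contextual-pdim}.
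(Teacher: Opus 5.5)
Your proposal is correct and follows the paper's proof: normalize revenue by $nH$, apply the standard pseudo-dimension uniform-convergence bound together with \cref{prop:contextual-pdim}, and finish with the usual ERM comparison. You also point out that matching the stated sample size requires the chaining-based bound rather than the cruder covering bound, which carries an extra $\log(1/\epsilon)$ factor; the paper leaves this implicit.
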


\begin{proof}
Under the bounded-value assumption, $0\le u_h(x,v)\le nH$.
Normalize by $nH$, apply the standard pseudo-dimension uniform-convergence
bound together with \cref{prop:contextual-pdim}, and use the
usual ERM argument.
\end{proof}

The guarantee also holds for approximate ERM.  If the optimization
procedure returns $\widetilde h$ whose empirical revenue is within
$\zeta$ of the empirical optimum, then the population guarantee above
incurs only the corresponding additional additive optimization error
$\zeta$ (with the uniform-convergence constants chosen accordingly).

\paragraph{How is ERM implemented?}
\cref{prop:contextual-pdim} is a statistical statement and does
not by itself imply that the optimization problem
\eqref{eq:contextual-erm} can be solved efficiently for an arbitrary
hypothesis class $\mathcal H$.  The computational problem depends on the
representation of $\mathcal H$.

For the constant-price class of \cref{sec:shared-price-learning}, exact ERM is particularly
simple.  The empirical objective is
\[
    \frac{p}{T}\sum_{t=1}^T N_{V_t}(p),
\]
which is linear between consecutive values appearing in the historical
sample.  Hence an optimum occurs at one of the at most $nT$ observed
bidder values.  Sorting these values and scanning them gives an exact ERM
in $O(nT\log(nT))$ time.

More generally, consider a $d$-dimensional affine predictor
\[
    h_\theta(x)=\langle\theta,x\rangle,
    \qquad \theta\in\Theta,
\]
where $\Theta$ is a bounded polytope and prices are constrained to be
nonnegative for every $x \in \mathcal X$.  The $nT$ hyperplanes
\[
    \langle\theta,X_t\rangle=V_{t,i},
    \qquad t\in[T],\ i\in[n],
\]
partition parameter space into cells.  Within any one cell,
$N_{V_t}(h_\theta(X_t))$ is fixed for every $t$, so the empirical revenue
objective is affine in $\theta$.  Thus exact ERM can be implemented by
enumerating the cells (including all faces of the hyperplane arrangement) and solving a linear optimization problem on each
one.  For fixed constant dimension $d$, the arrangement has
$O\left((nT)^d\right)$ cells, giving a polynomial-time exact procedure.

For richer predictor classes, exact ERM may be computationally difficult.
One may instead use a mixed-integer formulation, class-specific
optimization algorithms, or approximate empirical revenue maximization.
The statistical guarantee above should therefore be read as an
\emph{oracle-efficient} learnability result. Whenever ERM, or sufficiently
accurate approximate ERM, can be implemented for $\mathcal H$, its sample
complexity is controlled by
$O(d_{\mathcal H}\log n)$.

\paragraph{Truthfulness and exogeneity.}
Both training objectives use only historical outcomes and public covariates.
The learned price rule must be fixed independently of the current bidder
reports. In particular, model fitting, feature construction, and any ERM
computation must not use a bidder's current private report to determine the
price subsequently offered to that bidder. Frozen model parameters,
time-based sample splitting, and features available before bids are collected
provide natural ways to maintain this separation.

\end{document}